\def\BibTeX{{\rm B\kern-.05em{\sc i\kern-.025em b}\kern-.08em
    T\kern-.1667em\lower.7ex\hbox{E}\kern-.125emX}}
\begin{document}

\fancyhead{}

%%
%% The "title" command has an optional parameter,
%% allowing the author to define a "short title" to be used in page headers.
\title{SigGAN : Adversarial Model for Learning Signed Relationships in Networks}

%%
%% The "author" command and its associated commands are used to define
%% the authors and their affiliations.
%% Of note is the shared affiliation of the first two authors, and the
%% "authornote" and "authornotemark" commands
%% used to denote shared contribution to the research.

\author{Roshni Chakraborty}
\affiliation{%
  \institution{Indian Institute of Technology Patna}
  \city{Patna}
  \country{India}}
\email{roshni.pcs15@iitp.ac.in}

\author{Ritwika Das}
\affiliation{%
  \institution{National Institute of Technology Durgapur}
  \city{Durgapur}
  \country{India}
}
\email{ritwikadas1@gmail.com}

\author{Joydeep Chandra}
\affiliation{%
 \institution{Indian Institute of Technology Patna}
 % \city{}
  \country{India}}
\email{joydeep@iitp.ac.in}

\renewcommand{\shortauthors}{Chakraborty et al.}

\begin{abstract}
  Signed link prediction in graphs is an important problem that has applications in diverse domains. It is a binary classification problem that predicts whether an edge between a pair of nodes is positive or negative. Existing approaches for link prediction in unsigned networks cannot be directly applied for signed link prediction due to their inherent differences. Further, additional structural constraints, like, the structural balance property of the signed networks must be considered for signed link prediction. Recent signed link prediction approaches generate node representations using either generative models or discriminative models. Inspired by the recent success of Generative Adversarial Network (GAN) based models which comprises of a discriminator and generator in several applications, we propose a Generative Adversarial Network (GAN) based model for signed networks, SigGAN. It considers the requirements of signed networks, such as, integration of information from negative edges, high imbalance in number of positive and negative edges and structural balance theory. Comparing the performance with state of the art techniques on several real-world datasets validates the effectiveness of SigGAN.\\
\end{abstract}
\keywords{Signed Networks, Generative Adversarial Networks, Link Prediction, Structural Awareness, Structural Balance}

\maketitle

%%However, the major problem with these deep neural approaches is that for high dimensional data they often suffer from poor compatibility with the internal neural network structure. Further, the deterministic bias resulting from the finite samples of sub-networks used for training along with the missing data is also difficult to handle using these models. 

\section{Introduction}

\par The huge popularity of online social networks attracts users to share and exchange their views and opinions. Existing research works represent these relationships between the users through signed networks~\cite{yang2007community,yang2017stochastic} where the edges depict both polarity and strength~\cite{yuan2017negative,derr2018relevance}. Understanding these relationships has many applications~\cite{salehi2018detecting,das2018positive,tang2016node}, like, personalized product recommendation~\cite{massa2007trust} and identification of malicious users~\cite{kumar2014accurately}. Therefore, the prediction of the signed links in these networks remains an important problem~\cite{wu2016troll}. However, prediction of links in signed networks has different challenges than the 
conventional unsigned networks~\cite{beigi2016signed}. These challenges mainly stem from the differences due to the inclusion of sign in links. For example, several theories of social science, like \textit{homophily} and \textit{social influence}~\cite{shalizi2011homophily,doreian1989network} does not apply directly to signed networks. While two users who are connected are likely to be similar in unsigned networks, negative link between two users means they are likely to be dissimilar in signed networks whereas positive link between two users signify that they are likely to be similar. So, in signed networks, \textit{homophily} and \textit{social influence} does not directly apply just on the basis of the presence of link between a pair of nodes but also depend on the sign of the link. Similarly, the presence of sign affects the nature of transitive relationship between three users~\cite{tang2014ht}, like, \textit{enemy of my friend is my enemy}, \textit{friend of my friend is my friend} and \textit{enemy of my enemy is my friend} are common in signed networks whereas in unsigned networks, two nodes connected to a common node has a higher likelihood to be connected among themselves than two nodes who do not share a common node. Therefore, in signed networks, both the presence of sign and the link are important in order to understand the type of transitive relationship. Furthermore, Leskovec et al.~\cite{leskovec2010predicting} highlights that the presence of negative links adds more value as compared to only positive links. However, it is highly challenging to integrate the information from the negative links as the fraction of negative links is very low compared to the positive links.

\par To mitigate these issues, existing research works propose traditional feature based learning~\cite{javari2017statistical,tang2015negative,chen2019link,chakraborty2020we} and recently, deep learning approaches. Existing deep learning approaches rely on different forms of node embedding approaches which does not require the feature engineering efforts of traditional feature based learning approaches. These approaches either follow \textit{generative} or \textit{discriminative} based models to derive the node embedding. The \textit{generative} model-based representation learning frameworks rely on the conditional distribution of a node, $v_b$, being a neighbor of node $v_a$, given the connectivity preferences of $v_a$. These frameworks learn the likelihood of edges based on this conditional connectivity distribution, and thereby generate the vertex embedding that maximizes the likelihood of the given network connectivity. Several existing works, like SNE~\cite{yuan2017sne}, SIDE~\cite{kim2018side}, SigNET~\cite{islam2018signet}, sign2Vec~\cite{bhowmick2019network} follow \textit{generative} based approach to generate signed network embeddings. Subsequently, several existing research works, like SiNE~\cite{wang2017signed}, SHINE~\cite{wang2018shine}, DNE-SBP~\cite{shen2018deep}, have proposed \textit{discriminative} models which predict the presence of an edge based on the characteristics of the node 
pair, $v_a$ and $v_b$. 
However, there are several issues like missing and incomplete data that are quite prevalent in large scale network data that affect the performance of the  generative or discriminative based models~\cite{goodfellow2017nips}. GAN based models by combining both generative and discriminative models have shown to be more robust to these issues~\cite{goodfellow2017nips}. Recently, they have been highly successful in several applications, like, information retrieval~\cite{lu2019psgan}, image generation~\cite{denton2015deep} and collaborative filtering~\cite{chae2019rating,chae2018cfgan}. Recently, Wang et al. ~\cite{wang2018graphgan} have proposed a GAN based model to represent nodes in unsigned networks in which the generator effectively learns the underlying connectivity distribution which is highly effective for link prediction in unsigned networks. 

%However, this cannot be directly applied to signed networks, where the presence of links between a pair of nodes represents both similarity and dissimilarity of the node pair. Further, one of the major characteristics of the signed networks is the structural balance theory [5-7] among the connected node triplets.

%However, issues like missing and incomplete data that are quite prevalent in large scale network data affect the performance of these models whereas generative adversarial networks (GAN) are known to be more robust~\cite{goodfellow2017nips}. GAN based models combine both \textit{generative} and \textit{discriminative} models and have been highly successful in several applications, . Recently, Wang et al.~\cite{wang2018graphgan} have extended the usage of GAN based models to networks by a GraphGAN model which effectively learns the underlying connectivity distribution of nodes for node embedding. 

%%However, the major problem with these deep neural approaches is that for high dimensional data they suffer from poor compatibility with the internal neural network structure. Further, they suffer from deterministic bias resulting from the finite samples of sub-networks used for training. 

\par  However, GraphGAN~\cite{wang2018graphgan} cannot be directly applied to signed networks, due to the inherent differences between signed and unsigned networks as discussed previously. Further, one of the major characteristics of the signed networks is the structural balance theory~\cite{heider1958psychology,cartwright1956structural,hummon2003some} among the connected node triplets. In a structurally balanced triplet, the sign of the link between a pair of nodes $(v_a, v_b)$ connected to a common neighbor $v_c$ would be positive if the sign of the links $(v_a, v_c)$ and $(v_b, v_c)$ are either both positive or negative. Similarly, the sign of link $(v_a, v_b)$ would be negative if the sign of one of the links to $v_c$ is positive, and the other is negative. Therefore, we propose SigGAN, a GAN based model which is adapted to signed networks and integrates structural balance property.
The novel contribution of this paper is a GAN based model for generating node representations for signed networks that considers both the structural proximity of the nodes and the structural balance of the network. The proposed framework effectively unifies two models --- a binary-class \textit{generative model} that generates likely positive (or, negative), connected neighbors of a node, $v_a$ and a \textit{discriminative model} that identifies whether the generated node pair with the corresponding sign is one from the generated samples or the actual dataset --- to learn individual parameters for each model to minimize a combined loss function.
Validation on real world datasets, like \textit{Slashdot}, \textit{Epinions}, \textit{Reddit}, \textit{Bitcoin} and \textit{Wiki-RFA} indicate that the SigGAN can ensure $3-11\%$ higher micro F1-score than the existing works in both \textit{link prediction} and \textit{handling of sparsity}. The organization of the paper is as follows. We discuss the existing research works in Section~\ref{s:rel}. A formal definition of the problem is provided in Section~\ref{s:prob} followed by the details of the SigGAN in Section~\ref{s:prop}. We discuss the experimental settings in Section~\ref{s:expt} and the details of our observations and results are given in Section~\ref{s:res}. We finally draw our conclusions in Section~\ref{s:con}.

\section{Related Works}~\label{s:rel}
%Several research works aim to predict the sign of a link in signed networks. One of the possible categorizations of these works can be based on the learning techniques used. 

Existing research works which predict the sign of a link can be categorized into feature based approaches and representation learning based approaches. We discuss the works that belong to either of these categories next.

\subsection{Feature based Machine Learning Approaches}

Existing feature based machine learning approaches use either \textit{local}, \textit{global} or a comibination of both to predict the sign of the link. For example, the \textit{local} attributes include information about common neighbours, structural balance~\cite{leskovec2010predicting,leskovec2010signed}, node features~\cite{tang2015negative}, etc. Existing approaches have explored the role of 
user interactions~\cite{agrawal2013link}, similarity in user characteristics~\cite{cheng2017unsupervised}, neighborhood information~\cite{hu2017sparse}, etc., to predict the sign of the missing link. While the utilization of the \textit{local} attributes incur low computational cost, it fails in sparse networks. While \textit{global} attributes, like, social hierarchy~\cite{lu2016exploring}, individual trust~\cite{wu2016troll, guha2004propagation}, structural balance theory~\cite{cartwright1956structural} and community based ranking~\cite{shahriari2014ranking} can handle the sparsity, they are computationally expensive~\cite{khodadadi2017sign}. To mitigate this computation cost, several research works explore \textit{meso} attributes, like, community structures~\cite{khodadadi2017sign,chiang2014prediction,chiang2011exploiting}, edge dual properties~\cite{yuan2017edge}, Katz similarity-based walk~\cite{singh2017measuring} and, clustering-based approaches~\cite{javari2014cluster}. However, utilization of only one type of attribute affects the prediction accuracy of the sign of the link.

%extraction of features while drilling through the entire network still incurs high computational cost. Further, utilization of the only \textit{global} attributes fails to understand the neighborhood relationships which negatively affects the accuracy in prediction of the sign of the link.

\par Therefore, recent works explore a combination of \textit{local} and \textit{global} attributes~\cite{chen2017link,hu2017sparse,javari2017statistical}. Though a combination of both attributes can ensure higher accuracy, the extraction of \textit{global} features for all nodes will lead to high computation cost. Further, these existing works ignore the high variance in the network characteristics of the edges that may provide vital cues in the prediction of the sign of an edge~\cite{chakraborty2020we}. This knowledge can aid in the development of an adaptive system that applies \textit{local}, \textit{global}, or a combination of both based on the characteristics of the edge~\cite{chakraborty2020we}. Further, these approaches are highly dependent on the selection of the features and therefore, deep learning approaches which can inherently integrate both \textit{local} and \textit{global} node attributes are more successful. 
%We provide a brief overview of the existing node embedding based approaches hereby.

\subsection{Node Embedding Based Approaches}
%Although there are a plethora of research works~\cite{zhang2018network} for representation learning (node embedding) for unsigned networks, the SigGANes are not directly applicable to signed networks. This is mainly due to the inherent differences in both of these networks, like different \textit{transitivity} and \textt{homophily} relationships and the presence of structural constraints based on structural balance and status theory.  We discuss the relevant literature for node embedding in signed networks.Network embedding is one of the fundamental problems in network analysis and has received much interest from the data mining community recently [1, 2, 4, 7, 9, 21, 23, 28–31, 33, 34, 37]. Network embedding maps nodes into low-dimensional vector space that summarizes various aspects of network topol- ogy and link structure.

Recently, node embedding has been proven to be very useful in understanding the structure of the network, link prediction, node classification, etc. Several existing research works~\cite{perozzi2014deepwalk,grover2016node2vec,tang2015line,tian2014learning,wang2016structural} have explored different aspects of network topology of a node to map each node into a low-dimensional vector space. However, these approaches can not be directly applied to signed networks due to the inherent differences between signed and unsigned networks. Existing node embedding approaches for signed networks are either \textit{matrix factorization} based or, \textit{deep learning frameworks}. The \textit{deep learning frameworks} can further be segregated into \textit{generative} and \textit{discriminative} based representation learning, where \textit{generative} models focus on learning the underlying connectivity distribution in the graph and \textit{discriminative} models predict the probability of an edge based on the attributes of the pair of nodes. Existing \textit{generative} models rely on \textit{representation learning by random walk based approaches} and the \textit{discriminative} models rely on \textit{representation learning by adjacency relationship-based embedding} which we discuss next.  

\par Existing \textit{representation learning by random walk based approaches} consider the nodes present in randomly selected paths~\cite{yuan2017sne} or, the co-occurrence of nodes from randomly selected paths~\cite{kim2018side}. Further, to handle the high computational cost to incorporate all the possible edges, Kim et al.~\cite{kim2018side} propose a variant of \textit{negative sampling} in signed networks. However, none of these works consider the higher-order proximity or the signed relationships. Bhowmick et al.~\cite{bhowmick2019network} address this issue by a trust-based random walk that considers both  higher-order neighborhood and signed (trust) relationships. However, as none of these works explicitly consider structural balance into the generated embeddings. Therefore, Islam et al.~\cite{islam2018signet} introduce the integration of \textit{structural balance} into random walks. Subsequently, \textit{discriminator} based frameworks generate low dimension representation of nodes, like, Shen et al.~\cite{shen2018deep} explicitly considers structural balance information while generating the embedding vector of a node. Additionally, Lu et al.~\cite{lu2019ssne} consider both \textit{status} and \textit{structural balance} and Chen et al.~\cite{chen2018bridge} consider \textit{bridge edges}, \textit{status} and \textit{structural balance} for generating the node embedding. However, the major problem with all these methods is that they suffer from poor compatibility for high dimensional data. Further, they are prone to poor training in the face of missing edges or smaller sample sizes~\cite{goodfellow2017nips}, which are major issues in large scale networks generated from real data.

\par On the other hand generative adversarial networks are more robust to these issues~\cite{goodfellow2014generative} and hence, have been used extensively in several applications~\cite{zhang2018generative} including predicting links in unsigned networks~\cite{wang2018graphgan}. Unlike unsigned networks, where the node representations mainly deal with presence or absence of links, for signed networks, the problem is identifying the sign of the link. Further, for signed networks, the GAN model should be able to handle specific properties of signed networks, like the high imbalance in positive and negative links and \textit{structural balance}. In this paper, we propose a GAN model for generating node representations in signed networks that can handle these issues. We next provide a formal definition of the problem and subsequently, discuss the SigGAN.

%While the negative sign $s^{\prime}$ for an edge between $v_i$ and $v_j$ indicates a possible distrust or enemy relationship between $v_i$ and $v_j$, the positive sign, $s$, indicates a trust or friendship relation between the nodes.  However, to the best of our knowledge, GAN frameworks have never been directly applied to signed networks. 

\section{Problem Definition}~\label{s:prob}

\par Let, $G=(V, E, S)$ denote a signed network, where $V$ denotes the set of vertices
(nodes) and $E$ represents the set of edges which has either positive ($s$) or negative ($s^{\prime}$) sign. $S=\{s,s^{\prime}\}$ denotes the set of signs considered for the edges. We represent the signs as $s$ and $s^{\prime}$ rather than $+$ or $-$. We assume that we have information about certain edges along with their signs for a network. Thus, for a given node $v_j$, the objective of the problem is to predict the possible sign ($s$ or $s^{\prime}$) of the missing links from $v_j$. Therefore, for a given node $v_j$ we determine the probability $p(\phi_i|v_j)$, for all nodes $v_i (i\neq j)$, where $\phi_i$ represents a random variable that takes values in $S$. For example, $p(\phi_i=s|v_j)$  denotes the probability that $v_i$ is positively connected to $v_j$. We can further extend this problem to directed networks, i.e., an edge from $v_i$ to $v_j$ and $p(\phi_i|v_j)$ may not be equal to $p(\phi_j|v_i)$. However, this would also involve implementing suitable discriminator models that also consider the directions of the edges. Since the objective of this paper is to show the applicability of GAN in generating node representations in signed networks, we avoid dealing with the issue of the direction of the edges in the network. 
The formation of the signed network strictly depends on the application. For example, for a social news site, like \textit{Slashdot} \footnote{https://snap.stanford.edu/data/soc-Slashdot0902.html}, the signed network comprises of users as nodes and the edges are ($-1$, or $+1$) depending on the interaction of the users in comments. The prediction task can be considered as determining the probably signed rating of one user towards another. We discuss SigGAN next. 
%a signed network  formed with the users, commenting on news, as nodes and a directed edge is formed from one user to another if the former has rated ($-1$, $0$ or $+1$) on the comment of the other. The ratings can be made based on typical perceptions about the comment like normal, off-topic, insightful, redundant, interesting, or troll to name a few. 

\section{Proposed Approach}~\label{s:prop}
SigGAN relies on a GAN based framework that combine a generative and discriminative model using a minimax game~\cite{wang2018graphgan} to generate node embedding. The generative model in GAN either outputs a labeled data from the true dataset or generates a fake labeled data based on the node embedding, whereas the discriminator model attempts to classify the generated data as true or fake. In SigGAN, we explicitly consider the characteristics of signed networks, like \textit{structural balance theory} property~\cite{islam2018signet} , \textit{signed homophily} and the usual requirements that have been considered in GraphGAN~\cite{wang2018graphgan}, such as, \textit{structural awareness} and \textit{low computational complexity}.

%To the best of our knowledge, this is the first effort in using GANs for signed link prediction that considers all these requirements. 

%Although node embedding generated through GAN based model have been applied for unsigned link prediction (like GraphGAN~\cite{wang2018graphgan}), however as stated earlier, the typical characteristics of signed networks pose new challenges in applying these models for signed link prediction. Unlike unsigned networks where connected nodes will have similar embedding in the representation space, in signed networks, the separation of the embedding should depend on the sign of the links. Two nodes connected by a negative link has a higher separation than the nodes connected by a positive link. Further, as previously discussed, \textit{structural balance theory} property~\cite{islam2018signet} which must also be considered in generating the representations. 

\par SigGAN learns the underlying signed adjacency distribution  of each node, $v_j$ across all nodes in $V$, i.e., $p_{true}(\phi_i|v_j)$, where  $\phi_i\in S$ represents the sign of the link from $v_j$ to $v_i$. SigGAN uses a generator and discriminator model that operates at tandem to improve the prediction accuracy of the system. The task of the generator function $J(\phi_i|v_j;\theta_J)$ is to approximate the true positive and negative connections of $v_j$, that is used to select it's likely neighbors with a given sign from the underlying distribution $p_{true}(\phi_i|v_j)$. The proposed generator model has the following key properties:
\begin{enumerate}
    \item It generates a probability of the sign of the edges from $V_i$ to other nodes.
    \item It is structurally aware, where the probability of the existence of an edge to a far away node tends to zero.
    \item It maintains the structural balance theory.
    \item The generator model is computationally efficient.
\end{enumerate}
The discriminator function, $D(v_i,v_j,\phi_i;\theta_D)$ uses a MLP based model that outputs a single scalar value that represents how likely an edge $(v_i,v_j)$ is of a given sign $\phi_i$. We train the discriminator by an iterative selection of negative and positive edges. However, Shen et al. \cite{shen2018deep} shows that the number of negative edges is very less compared to the number of positive edges and the cost of forming a negative edge is higher than the formation of a positive edge in signed networks. To handle this, we select negative edges with an equal probability as the positive edges for the discriminator to learn so that SigGAN learns to predict the negative edge with high precision as well.
Therefore, given the graph $G$, generator $J$ tries to generate the positive (or negative) neighbors of $v_j$ which are similar to its actual positive (or negative) neighbors. We term these neighbors of $v_j$ generated by the generator as \textit{fake} neighbors, whereas those neighbors from the actual graph are \textit{true} neighbors. The discriminator $D$ determines if the generated neighbors of $v_j$ are fake or true through a two player minimax with value function $\mathcal{V}(J, D)$, given as
\begin{eqnarray}
\min_{\Theta_J} \max_{\Theta_D} \mathcal{V}(J,D)=\sum_{j=1}^{V}(\mathbb{E}_{v \sim p_{true}(\cdot|v_{j})}[\log D(v, v_{j}, \phi;\theta_{D})]\nonumber\\ + \mathbb{E}_{v \sim J(\cdot|v_{c}; \theta_{J})}[\log (1- D(v, v_{j},\phi;\theta_{D})]) \label{eq:valFunc}
\end{eqnarray}

$\theta_D$ represents the vector representations of all nodes $v_i$, $i\neq j$, (described later) which can be potential neighbors of $v_j$. Therefore, the optimal parameters of the generator and discriminator are learned by alternately maximizing and minimizing the value function $\mathcal{V}(J, D)$. At each iteration, the discriminator $D$ is trained using a batch of true samples from $p_{true}(\cdot|v_j)$ and fake samples generated using $J(\cdot|v_j)$. The generator $J$ is updated using a policy gradient. The continuous competition between $D$ and $J$ ensures generating a suitable representation of the node, $v_j$ that satisfies the required properties. We discuss the functions of $D$ and $J$ next.
\subsection{Discriminator}
Given a list of node pairs ($\cdot, v_j)$, the objective for
the discriminator, $D$ is to maximize the log-probability of assigning the correct label depending on whether the edge has been sampled from the actual network (true) or has been generated by the generator (fake). The discriminator mainly relies on node representations that can be compared pairwise to discriminate between true and fake edges. 

%\subsubsection{The discriminator model} 
\par There exists a plethora of models that can be used for the discriminator; in this paper we do not contribute in investigating the suitability of the models and leave it as a future goal. Rather, we use a dense network to determine node representations that can be compared to derive the likely chances of them being connected with a given sign $\phi$. The model uses a single hidden layer, in which the number of units equals the dimension of the node representations (represented by $k$ in our case). The input to the model is the one-hot vector of a node in the network. Each unit in the output layer corresponds to a node in the network and is activated by a sigmoid function. The output of the sigmoid function of the $i^{th}$ output unit provides a measure of the possibility of node $v_i$ being connected to $v_j$ with sign $\phi_i$.
Suppose, $h_j\in\mathbb{R}^{n}$ is the one hot encoding vector of node $v_j$ that is provided as input to the input layer of the discriminator. Let, $W\in\mathbb{R}^{n\times k}$ represent a weight matrix whose $i^{th}$ column elements represent the corresponding weights of the links from each of the inputs to the $i^{th}$ unit of the hidden layer. Thus, $d_j=W^{T}h_j$ represents the hidden unit vector of length $k$. Suppose, $W^{\prime}\in \mathbb{R}^{k\times n}$ represents a matrix of the weights of hidden units to the output layer, where the $[ij]^{th}$ element of $W^{\prime}$ represents the weight of the link connecting the $i^{th}$ hidden layer unit to the $j^{th}$ output unit. Thus, the output $o_i$ of each of the output units can be represented as $o_i=\sigma(W^{\prime}_{\cdot i}\cdot d_j)$, where $W^{\prime}_{\cdot i}$ represents the $i^{th}$ column of $W^{\prime}$, and can be considered as a $k$ dimension embedding of the node $v_i$, represented as $d_i$. Further, $\sigma$ is the sigmoid function given as $\sigma(x)=\frac{1}{1+\exp(-x)}$. 
Thus, if $d_i, d_j\in \mathbb{R}^k$ represents the $k$ dimensional embedding vector for nodes $v_i$ and $v_j$, respectively, and $\phi$ represents the corresponding link sign of these nodes, then the discriminator function would be represented as 
\begin{eqnarray}
D(v_i, v_j, \phi)=\sigma(\phi d_i^Td_j)
=\frac{1}{1+\exp(-\phi d_i^Td_j)}\label{eq:discrim}
\end{eqnarray}
As evident from equation~\ref{eq:discrim}, when $\phi$ is positive, $D$ returns a near to $1$ value if $d_i$ and $d_j$ are similar, indicating that nodes with similar representations are most likely to be positively connected. On the other hand for a negative value of $\phi$, the discriminator would return a higher score if $d_i$ and $d_j$ are highly dissimilar. The discriminator is trained based on the loss function stated in equation~\ref{eq:valFunc}. For each $d_j$, a stochastic gradient ascent is used to update $\theta_D$, the vector representations of $v_i$. We describe the generator model that we use in our SigGAN next.

\subsection{Generator}
Contrary to the discriminator, the goal of the generator $J$
is to generate the links of each node $v_j$ with their corresponding sign that mimics the true adjacency distribution $p_{true}(\phi_i|v_j)$. The generator model is defined as a function $J(\phi_i|v_j)$ that generates node representations to minimize the loss function
\begin{eqnarray}
\mathbb{E}_{v\sim J(\cdot|v_j;\theta_{J})}[\log(1-D(v_i,v_j,\phi;\theta_D)],
\end{eqnarray}
which is the log probability of the discriminator correctly identifying fake link samples generated by the generator. Since the generator samples through the discrete space $v_i$ to increase the probability score, to minimize the loss function we derive its gradient with respect to $\theta_J$ using policy gradient. Thus, if $N(v_j)$ denotes the neighbors of $v_j$, then 
\begin{eqnarray}
& &\nabla_{\theta_J}\sum_{j=1}^{|V|}\left[\mathbb{E}_{v\sim J(\cdot|v_j;\theta_J)}[\log(1-D(v, v_j,\phi;\theta_D)]\right]\nonumber\\
&=&\nabla_{\theta_J}\sum_{j=1}^{|V|}\sum_{i=1}^{N(v_j)}J(\phi_i|v_j)[\log(1-D(v_i, v_j,\phi_i;\theta_D)]\nonumber\\
&=& \sum_{j=1}^{|V|}\sum_{i=1}^{N(v_j)}\nabla_{\theta_J} J(\phi_i|v_j)[\log(1-D(v_i, v_j,\phi_i;\theta_D)]
\nonumber\\
&=&\sum_{j=1}^{|V|}\sum_{i=1}^{N(v_j)} J(\phi_i|v_j)\times\nonumber\\
& &\left[\nabla_{\theta_J} \log(J(\phi_i|v_j))\right] \left[\log(1-D(v_i, v_j,\phi_i;\theta_D))\right]\label{eq:genLoss}
\end{eqnarray}

Therefore, we can intuitively understand from Equation~\ref{eq:genLoss} that a higher value of $\log(1-D(v_i,v_j,\phi_i;\theta_D))$ for a given $v_i$ will lower the probability of generating $v_i$ with respect to $v_j$. We use a softmax function for the generator $J(\phi_i|v_j)$ which is given as
\begin{eqnarray}
J(\phi_i=s|v_j))&=&\frac{e^{s\left(g_i^Tg_j\right)}}{\sum_{k\neq j} e^{s\left(g_k^Tg_j\right)}}\label{eq:genSoft}
\end{eqnarray}
As evident from Equation~\ref{eq:genSoft}, for a given node $v_j$, $J$ generates a  neighbor $v_i$ with sign $s$ based on the representations, $g_i$ and $g_j$ of the nodes $v_i$ and $v_j$, respectively. A similar representation between the two nodes increases the probability of them being positively connected, whereas dissimilar representations increase the chances of a negative link. These representations are suitably derived based on the loss function stated in Equation~\ref{eq:genLoss} using gradient descent. However, a major problem with the softmax function is the high computation involved in updating the gradients. As evident from Equation~\ref{eq:genSoft}, for an arbitrary node $v_i$, the gradients must be calculated for the entire set of vertices and hence, $J(\phi_i|v_j)$ would require updating the representations of all $|V|$ nodes. Further, the softmax function ignores the rich structural information of the graphs, where the proximity of the nodes in the network can play an important role in the link formation~\cite{wang2018graphgan}. Existing approximation techniques that determine the softmax scores, like negative sampling and subsampling~\cite{mikolov2013distributed} do not consider the structural information. Hence, there is a need to address these issues for the practical applicability of this method. We propose a modification of the softmax approach that can mitigate these issues. 
\subsection{Modified Softmax for Signed Graphs} \label{s:modSmax}
In this Section, we discuss the proposed variant of the softmax function that can be used for the signed graphs. Although in GraphGAN~\cite{wang2018graphgan} a modified softmax function was proposed for unsigned networks that maintain specific properties, like, normalization, graph structure awareness, and computational efficiency, the proposed function is not applicable for signed networks. Further, an additional requirement of the signed networks is the structural balance among the nodes, which the softmax function must consider. These requirements demand the formulation of a suitable softmax function applicable for signed networks. We, hereby, summarize the four basic properties that we consider:
\begin{enumerate}
    \item \textit{Normalization} : The function must be normalized such that it produces a valid probability distribution, i.e., $\sum_{i\neq j}\sum_{t\in S}J(\phi_i=t|v_j)=1$.
    \item \textit{Graph Structure aware}: The function must take into account the structure of the graph to calculate the connectivity distribution between a pair of nodes.
    \item \textit{Computationally Efficient} : The updation of the values of $\theta_J$ for deriving the gradient descent must be computationally efficient.
    \item \textit{Structurally Balanced} : The sign of the links generated must maintain Structural balance theory.
\end{enumerate}
We next discuss the proposed softmax approach. Given a node, we introduce \textit{sign-specific relevance} probability of its neighbor. For a node $v_j$, the positive relevance probability of its neighbor $v_{i}$ is given as
\begin{eqnarray}
p(\phi_i=s|v_j)=\frac{e^{s(g_j^Tg_i)}}{\sum_{k\in N(v_j)}\sum_{t\in S} e^{t(g_j^Tg_k)}},\label{eq:signRelev}
\end{eqnarray}
whereas the negative relevance probability is obtained by replacing $s$ by $s^{\prime}$.
The sign-specific relevance probability indicates how likely a neighbor $v_i$ of node $v_j$ would be connected by a link with a given sign,  $s$ or $s^{\prime}$. Similar to the GraphGAN approach, for each node $v_j$, we initially make a Breadth First Search (BFS) traversal with $v_j$ as the root node. Let $v_{r_0}\rightarrow v_{r_1} \rightarrow \cdots\rightarrow v_{r_m}$ be a path in the BFS tree with root at $v_{r_0}$.   Let $\pi^{n-1}_{n,s}=p(\phi_{r_n}=s|v_{r_{n-1}})$ be the positive relevance probability of  node $v_{r_n}$ with respect to its uplink parent $v_{r_{n-1}}$. 
Therefore, $\pi^{0}_{m,s}$ is the relevance probability of the node at hop $m$ (written as sub-script of $\pi$) from the node at hop $0$ (written at super-script of $\pi$) connected by an edge with positive sign, $s$. From \textit{structural balance theory}, we can derive the sign of the edge connecting node at level $0$ and $m$ through the level, $m-1$. For example, on the basis of \textit{structural balance theory} for triads, the sign of the edge between node at level $0$ and $m$ would be positive, $s$, if the sign of the edge node at level $0$ and $m-1$ is $s$ and the sign of the edge node at level $m-1$ and $m$ is $s$. Additionally, the sign of the edge between node at level $0$ and $m$ would also be positive if the sign of the edge node at level $0$ and $m-1$ is $s^{\prime}$ and the sign of the edge node at level $m-1$ and $m$ is $s^{\prime}$. Similarly, we can find the the sign of the edge between node at level $0$ and $m$ would be negative if the sign of the edge between node at level $0$ and $m-1$ and between $m-1$ and $m$ is different. Therefore, we can extend the positive and negative relevance probability, respectively, for nodes that are more than one hop away by recursive formulations given as 
\begin{eqnarray}
\pi^{0}_{m,s}&=&
(\pi^0_{m-1,s}\pi^{m-1}_{m,s}+\pi^{0}_{m-1,s^{\prime}}\pi^{m-1}_{m,s^{\prime}}) \nonumber\\ \pi^{0}_{m,s^{\prime}}&=&(\pi^0_{m-1,s}\pi^{m-1}_{m,s'}+\pi^{0}_{m-1,s'}\pi^{m-1}_{m,s})\label{eq:recursivePi}
\end{eqnarray}
We are interested in determining the modified softmax function $J(\phi_{r_n}|v_{r_0})$ for any arbitrary node $v_{r_n}$ in the tree path with respect to the root node $v_{r_0}$. Assuming that the graph is strongly connected, a BFS tree with respect to a node as root will traverse all the nodes through a single unique path. This assumption holds true for any undirected graph; several real-world directed networks are also strongly connected with a large core and hence can be applied to these. Thus, we define $J(\phi_{r_n}|v_{r_0})$ as
\begin{eqnarray}
J(\phi_{r_n}=s|v_{r_0})&=&\pi^0_{n,s}\pi^n_{n-1,s} + \pi^0_{n, s^{\prime}}\pi^n_{n-1,s'}\nonumber\\
J(\phi_{r_n}=s^{\prime}|v_{r_0})&=&\pi^0_{n,s}\pi^n_{n-1,s^{\prime}} + \pi^0_{n, s^{\prime}}\pi^n_{n-1,s}\label{eq:genRoot}
\end{eqnarray}
We prove using appropriate theorems that the generator function $J(\phi_{r_n}=s|v_{r_0})$ satisfies the first three requirements that are mentioned above, whereas, we intuitively show that the modified softmax can also learn the structural balance property if the same exists in the network.
\begin{theorem}
For a given node $v_j$, using the modified softmax we get $\sum_{i\neq j}\sum_{t\in S}J(\phi_i=t|v_j)=1$
\end{theorem}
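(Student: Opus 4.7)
The plan is to reduce the claim to a recursive identity on the BFS tree and prove it by induction from the leaves to the root. First I would rewrite $\sum_{t \in S} J(\phi_{r_n} = t \mid v_{r_0})$ using Equation~\ref{eq:genRoot}: collecting the $\pi$ terms gives the clean factorization $\sum_{t \in S} J(\phi_{r_n} = t \mid v_{r_0}) = P(v_{r_n}) \cdot B(v_{r_n})$, where I write $P(v) := \pi^{0}_{d(v), s} + \pi^{0}_{d(v), s'}$ for the sign-summed relevance from the root to $v$ and $B(v) := \pi^{d(v)}_{d(v)-1, s} + \pi^{d(v)}_{d(v)-1, s'}$ for the sign-summed back-relevance from $v$ to its BFS parent. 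The recursion in Equation~\ref{eq:recursivePi} then makes $P$ multiplicative along any root-to-node path, so $P(v) = \prod_{j=1}^{d(v)} F(v_{r_{j-1}}, v_{r_j})$, where $F(u,w) := \sum_{t \in S} p(\phi_w = t \mid u)$.

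Next I would exploit the local softmax identity that is immediate from Equation~\ref{eq:signRelev}: at every node $u$, $\sum_{w \in N(u)} \sum_{t \in S} p(\phi_w = t \mid u) = 1$. Partitioning the BFS neighbors of a non-root $u$ into its parent and its children $C(u)$, this specializes to $B(u) + \sum_{c \in C(u)} F(u, c) = 1$, and at the root it becomes $\sum_{c \in C(v_{r_0})} F(v_{r_0}, c) = 1$ (with the convention $P(v_{r_0}) = 1$).

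The main step is induction on the height of the BFS subtree. For each node $u$ in the tree, set $Q(u) := \sum_{v \in T_u,\, v \neq u} [P(v)/P(u)] \, B(v)$, where $T_u$ is the BFS subtree rooted at $u$. I claim $B(u) + Q(u) = 1$ for every non-root $u$. The base case is a leaf $\ell$: $Q(\ell) = 0$ and $B(\ell) = 1$ since $\ell$'s only BFS neighbor is its parent. For the inductive step, splitting $Q(u)$ by children and using $P(v) = P(c) \cdot [P(v)/P(c)]$ for $v$ in the subtree of child $c$ yields $Q(u) = \sum_{c \in C(u)} F(u, c) \, [B(c) + Q(c)]$; applying the induction hypothesis collapses each bracket to $1$, and the local identity at $u$ then gives $B(u) + Q(u) = 1$. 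Running the same computation at the root (which has no $B$ term) produces $Q(v_{r_0}) = \sum_{c \in C(v_{r_0})} F(v_{r_0}, c) = 1$, which is exactly $\sum_{i \neq j} \sum_{t \in S} J(\phi_i = t \mid v_j) = 1$ once we identify $j$ with $r_0$.

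The main obstacle is a subtle interpretive one rather than a computational one: the local identity $B(u) + \sum_{c \in C(u)} F(u, c) = 1$ requires that the softmax denominator in Equation~\ref{eq:signRelev} be normalized over the BFS-tree neighbors of $u$ with respect to the current root (as in GraphGAN), or equivalently that $G$ be a tree. If one instead reads $N(v_j)$ as the full graph neighborhood and the graph has cycles producing BFS cross-edges, extra bookkeeping is needed to absorb the non-tree neighbors into the induction. Once this convention is fixed, the induction above is entirely mechanical.
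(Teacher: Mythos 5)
Your proof is correct and follows essentially the same route as the paper's: the factorization $\sum_{t\in S}J(\phi_{r_n}=t|v_{r_0})=P(v_{r_n})\,B(v_{r_n})$, the local normalization of the sign-specific relevance probabilities at each node, and the leaf-to-root induction are exactly the ingredients of the paper's subtree-collapsing argument, just stated with cleaner notation and a more explicitly formalized inductive step. Your closing caveat --- that the normalization in Equation~\ref{eq:signRelev} must effectively be taken over the BFS-tree neighborhood for the local identity to hold --- is a real subtlety, but the paper's proof makes the same implicit assumption, so it does not separate the two arguments.
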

%\\$\sum_{v_i\in ST_m}J(\phi_i=t|v_{r_0})=\sum_{t\inS}J(\phi_{r_m}=t|v_{r_0})$. In other words,  
\begin{proof}
We start by showing that for any sub-tree $ST_m$ (rooted at $v_{r_m}$) of the BFS tree with $v_{r_0}$ as the root, the total sum of the softmax scores of all the nodes in the sub-tree, with respect to $v_{r_0}$,  is the softmax score of $v_{r_m}$ calculated with respect to $v_{r_0}$. We use this concept for the sub-trees rooted at the child nodes of the BFS tree of $v_{r_0}$. From the expression of sign-specific relevance probability stated in Equation~\ref{eq:signRelev}, we can directly conclude that for a given node $v_j$, if $N(v_{j})$ be its neighbors then $\sum_{v_i\in N(v_j)}\sum_{t\in S}p(\phi_i=t|v_j)=1$.
\par Initially, we consider the case where the subtree is rooted at the node $v_{r_m}$, which has only leaf nodes as its children. Let the children be denoted as $v_{cm_1}, v_{cm_2},\ldots,v_{cm_l}.$
So, $$\sum_{i\in ST_{r_m}}\sum_{t\in S}J(\phi_i=t|v_{r_0})$$ can be written as
$$\sum_{t\in S}J(\phi_{r_m}=t|v_{r_0}) + \sum_{i\in C(v_{r_m})}\sum_{t\in S}J(\phi_{cm_i}=t|v_{r_0}),$$
where $C(v_{r_m})$ denotes the child nodes of $v_{r_m}$.
From Equation~\ref{eq:recursivePi}, we have
\begin{eqnarray}
& &\sum_{t\in S}J(\phi_{r_m}=t|v_{r_0})\nonumber\\
&=& (\pi^0_{m-1,s}\pi^{m-1}_{m,s}+\pi^0_{m-1,s^{\prime}}\pi^{m-1}_{m,s^{\prime}})\pi^{m}_{m-1,s}\nonumber\\
&+& (\pi^0_{m-1,s}\pi^{m-1}_{m,s^{\prime}}+\pi^0_{m-1,s^{\prime}}\pi^{m-1}_{m,s})\pi^{m}_{m-1,s^{\prime}}\nonumber\\
&+& (\pi^0_{m-1,s}\pi^{m-1}_{m,s}+\pi^0_{m-1,s^{\prime}}\pi^{m-1}_{m,s^{\prime}})\pi^{m}_{m-1,s^{\prime}}\nonumber\\
&+& (\pi^0_{m-1,s}\pi^{m-1}_{m,s^{\prime}}+\pi^0_{m-1,s^{\prime}}\pi^{m-1}_{m,s})\pi^{m}_{m-1,s}\nonumber\\
&=&(\pi^0_{m-1,s}\pi^{m-1}_{m,s}+\pi^0_{m-1,s^{\prime}}\pi^{m-1}_{m,s^{\prime}})(\pi^{m}_{m-1,s}+\pi^{m}_{m-1,s^{\prime}})\nonumber\\
&+& (\pi^0_{m-1,s}\pi^{m-1}_{m,s^{\prime}}+\pi^0_{m-1,s^{\prime}}\pi^{m-1}_{m,s})(\pi^{m}_{m-1,s}+\pi^{m}_{m-1,s^{\prime}})\nonumber\\
&=&\pi^0_{m,s}(\pi^{m}_{m-1,s}+\pi^{m}_{m-1,s^{\prime}})+\pi^0_{m,s^{\prime}}(\pi^{m}_{m-1,s}+\pi^{m}_{m-1,s^{\prime}})\nonumber\\\label{eq:leaf}
\end{eqnarray}
A similar set of derivations for $\sum_{i\in C(v_{r_m})}\sum_{t\in S}J(\phi_{cm_i}=t|v_{r_0})$ yields
\begin{eqnarray}
& &\sum_{i\in C(v_{r_m})}\sum_{t\in S}J(\phi_{cm_i}=t|v_{r_0})\nonumber\\
&=&\sum_{i\in C(v_{r_m})}[(\pi^0_{m,s}\pi^m_{cm_i,s}+\pi^0_{m,s^{\prime}}\pi^m_{cm_i,s^{\prime}})(\pi^{cm_i}_{m,s}+\pi^{cm_i}_{m,s^\prime})\nonumber\\
&+& (\pi^0_{m,s}\pi^m_{cm_i,s^{\prime}}+\pi^0_{m,s^{\prime}}\pi^m_{cm_i,s})(\pi^{cm_i}_{m,s}+\pi^{cm_i}_{m,s^\prime})]\label{eq:leafChild}
\end{eqnarray}
Since for a leaf node $\pi^{cm_i}_{m,s}+\pi^{cm_i}_{m,s^{\prime}}=1$, Equation~\ref{eq:leafChild} can be expressed as
\begin{eqnarray}
\sum_{i\in C(v_{r_m})}\left[\pi^0_{m,s}(\pi^{m}_{cm_i,s}+\pi^{m}_{cm_i,s^{\prime}})+\pi^0_{m,s^{\prime}}(\pi^{m}_{cm_i,s}+\pi^{m}_{cm_i,s^{\prime}})\right]\label{eq:leafChildFinal}
\end{eqnarray}
One may note that for the node $v_{r_m}$, from Equation~\ref{eq:signRelev} we get $\pi^m_{m-1, s}+\pi^m_{m-1, s^{\prime}}+\sum_{i\in C(v_{r_m})}\left[\pi^{m}_{cm_i,s}+\pi^m_{cm_i,s^{\prime}}\right]=1$ and thus adding Equations~\ref{eq:leaf} and ~\ref{eq:leafChildFinal}, we get
\begin{eqnarray}
\sum_{i\in ST_{r_m}}\sum_{t\in S}J(\phi_i=t|v_{r_0})&=& \pi^0_{m,s}+\pi^0_{m,s^{\prime}}\label{eq:genSubTree}
\end{eqnarray}
We collapse the sub-tree rooted at $v_{r_m}$ so as to form a single leaf node and recursively work towards the sub-tree rooted at the child nodes of the root node $v_{r_0}$. Then, for the child node $v_{r_1}$, we have 
\begin{eqnarray}
\sum_{i\in ST_{r_1}}\sum_{t\in S}J(\phi_i=t|v_{r_0})&=& \pi^0_{1,s}+\pi^0_{1,s^{\prime}}
\end{eqnarray}
Taking the sum over all subtrees rooted at the child nodes of $v_{r_0}$, we have
\begin{eqnarray}
\sum_{i\neq r_0}\sum_{t\in S}J(\phi_i=t|v_{r_0})=1,
\end{eqnarray}
which completes our proof that the normalization property holds for the modified softmax function.
\end{proof}
%we further justify that graph soft- max characterizes the real pattern of connectivity distribu- tion precisely by conducting an empirical study in the ex- periment part
In order to show that the modified softmax considers the connectivity pattern of a node, we prove next how the modified softmax score of the root node changes with respect to the shortest path length to any node as followed by Wang et al. \cite{wang2018graphgan} in unsigned networks.

%considers the graph structure in generating the probability scores.
\begin{theorem}
In the modified softmax $J(\phi_i|v_j, \theta_J)$ decreases with increasing shortest path distance between $v_i$ and $v_j$.
\end{theorem}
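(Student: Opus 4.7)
The plan is to show that the combined weight $J(\phi_{r_n}=s|v_{r_0})+J(\phi_{r_n}=s^{\prime}|v_{r_0})$ assigned to a node at BFS depth $n$ decays geometrically with $n$, by factoring it as a product of single-hop relevance sums that each lie strictly below one. First I would establish the telescoping identity
\[
\pi^{0}_{m,s}+\pi^{0}_{m,s^{\prime}} \;=\; \prod_{k=1}^{m}\bigl(\pi^{k-1}_{k,s}+\pi^{k-1}_{k,s^{\prime}}\bigr)
\]
by induction on $m$. The base case $m=1$ is immediate from the definition of sign-specific relevance. For the inductive step I add the two parts of Equation~\ref{eq:recursivePi} and collect cross terms, which yields
\[
\pi^{0}_{m,s}+\pi^{0}_{m,s^{\prime}} = \bigl(\pi^{0}_{m-1,s}+\pi^{0}_{m-1,s^{\prime}}\bigr)\bigl(\pi^{m-1}_{m,s}+\pi^{m-1}_{m,s^{\prime}}\bigr),
\]
after which the inductive hypothesis gives the product form. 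This is the same kind of algebraic pairing that appears in the proof of Theorem~1, so it should go through without difficulty.

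Next I would appeal to the normalization of the sign-specific relevance in Equation~\ref{eq:signRelev}: for every node $v_{r_{k-1}}$ along the BFS path, summing $p(\phi_{\ell}=t|v_{r_{k-1}})$ over all $v_{\ell}\in N(v_{r_{k-1}})$ and both $t\in S$ equals $1$. Hence the contribution $\pi^{k-1}_{k,s}+\pi^{k-1}_{k,s^{\prime}}$ of a single designated child $v_{r_k}$ is strictly less than one whenever the parent has at least two neighbors, which is the generic case in any non-trivial signed network. Each factor in the telescoping product therefore lies in $(0,1)$, so the product is strictly decreasing in the depth $m$.

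Finally I would combine this with the direct calculation
\[
J(\phi_{r_n}=s|v_{r_0})+J(\phi_{r_n}=s^{\prime}|v_{r_0}) = \bigl(\pi^{0}_{n,s}+\pi^{0}_{n,s^{\prime}}\bigr)\bigl(\pi^{n}_{n-1,s}+\pi^{n}_{n-1,s^{\prime}}\bigr),
\]
obtained from Equation~\ref{eq:genRoot} by pairing terms exactly as in Theorem~1. Both factors are in $(0,1)$, so the total softmax mass at depth $n$ decays at least geometrically in $n$, and each individual $J(\phi_{r_n}|v_{r_0})$, being non-negative and bounded above by this total, must decrease with the shortest-path distance from $v_{r_0}$. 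The main obstacle is that, for a \emph{fixed} sign, $J(\phi_{r_n}=s|v_{r_0})$ need not be strictly term-by-term monotone in $n$, because the two cross-coupled recurrences for $\pi^{0}_{m,s}$ and $\pi^{0}_{m,s^{\prime}}$ can swap weight between signs whenever a balance-theoretic sign flip occurs along the path. The clean monotonicity therefore lives at the level of the combined both-signs mass, and the per-sign claim has to be phrased as a geometric upper-bound envelope rather than as a strict step-by-step inequality.
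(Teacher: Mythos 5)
Your proposal is correct, and its core mechanism is the same one the paper gestures at: the softmax mass of a node at BFS depth $n$ factors into a product of per-hop sign-specific relevance sums, each lying in $(0,1)$, so the mass decays geometrically with depth. The difference is one of rigor rather than route. The paper's entire proof is a single sentence asserting that $J(\phi_{r_n}|v_{r_0})$ ``is proportional to the product of $n$ exponential terms'' and hence decreases exponentially; it never exhibits the factorization. You actually supply the missing lemma --- the telescoping identity $\pi^{0}_{m,s}+\pi^{0}_{m,s^{\prime}}=\prod_{k=1}^{m}\bigl(\pi^{k-1}_{k,s}+\pi^{k-1}_{k,s^{\prime}}\bigr)$, obtained by summing the two cross-coupled recurrences in Equation~\ref{eq:recursivePi} --- and your algebra checks out. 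More importantly, you identify a genuine weakness the paper silently elides: the per-sign quantity $J(\phi_{r_n}=s|v_{r_0})$ is not term-by-term monotone in $n$ because balance-induced sign flips can shuttle mass between $\pi^{0}_{m,s}$ and $\pi^{0}_{m,s^{\prime}}$, so the clean decay statement only holds for the combined both-signs mass, with the per-sign claim surviving only as a geometric upper envelope. That caveat is a real and correct sharpening of what the theorem can actually deliver. One further point worth noting (which applies equally to the paper's argument and yours): a product of $n+1$ factors in $(0,1)$ need not be smaller than a product of $n$ different factors in $(0,1)$, so neither argument yields strict monotonicity across distinct BFS branches --- only the exponential-envelope interpretation is defensible, and your closing paragraph is the honest way to phrase it.
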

\begin{proof}
Since BFS tree preserves the shortest path from the root node to the other nodes, so for a root node $v_{r_0}$, the probability $J(\phi_{r_n}|v_j, \theta_J)$ for an arbitrary node $v_{r_n}$ that is $n$ hops away is proportional to the product of $n$ exponential terms (the sign-specific relevance probabilities) and thus, decreases exponentially.
\end{proof}

Therefore, on the basis of proof of Theorem 2, we conclude that the modified softmax can capture the connectivity pattern of a node as it always follows the shortest path distance. We follow Wang et al. \cite{wang2018graphgan} who discuss that capturing the connectivity pattern inherently captures the graph structure and therefore, conclude that SigGAN can capture the graph structure of signed networks. We next show that the modified softmax improves upon the computational efficiency in deriving $J(\phi_i|v_j)$.

%Following Theorem 2, we further justify that graph soft- max characterizes the real pattern of connectivity distribu- tion precisely by conducting an empirical study in the ex- periment part.

\begin{theorem}
For large networks with power-law degree distributions, to derive the probability $J(\phi_i|v_j)$, the average number of node representation updates is $O(d\ln |V|)$, where $d$ is the average degree of the nodes. 
\end{theorem}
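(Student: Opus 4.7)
The plan is to decompose the cost of computing $J(\phi_i|v_j)$ into (i) the length of the BFS path from $v_j$ to $v_i$ along which the recursion in Equation~\ref{eq:recursivePi} is unrolled, and (ii) the per-hop cost of evaluating the sign-specific relevance probability of Equation~\ref{eq:signRelev}, which requires touching the representations of the neighbors of the current node on the path. If I can bound each of these two factors, the product gives the claimed $O(d\ln|V|)$.

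First I would observe that, because the BFS rooted at $v_{r_0}$ yields a unique root-to-$v_i$ path and the generator evaluation in Equations~\ref{eq:genRoot}--\ref{eq:recursivePi} only references the $\pi$ values along this path, the number of hops involved is exactly the shortest-path distance $\ell(v_j, v_i)$. For each intermediate node $v_{r_k}$ on the path, computing $\pi^{k-1}_{k,s}$ (and $\pi^{k-1}_{k,s'}$) from Equation~\ref{eq:signRelev} requires one representation update per neighbor of $v_{r_{k-1}}$ (to evaluate the normalizing denominator), so the per-hop cost is proportional to $\deg(v_{r_{k-1}})$. Summing along the path, the total number of node representation updates is at most $\sum_{k=1}^{\ell(v_j,v_i)} \deg(v_{r_{k-1}})$.

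Next I would invoke two well-known properties of large networks with power-law degree distributions. The first is the small-world property: in such networks the average shortest-path distance between two nodes scales as $O(\ln |V|)$ (this is a standard result for scale-free graphs and can be cited). The second is that the expected degree of a node encountered along a random path, averaged over the graph, is on the order of the average degree $d$. Combining these, the expected value of the sum $\sum_{k} \deg(v_{r_{k-1}})$ over a randomly chosen target node is $O(d \cdot \ln|V|)$, which is exactly the stated bound.

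The main obstacle is handling the second ingredient rigorously, since in power-law graphs the degree distribution is heavy-tailed and the nodes encountered along a BFS path are not uniformly sampled — high-degree hubs tend to be over-represented. I would address this by noting that for the purpose of an \emph{average}-case bound over source--target pairs, the heavy-tailed degree still yields a finite mean $d$, and the amortized per-hop cost in a BFS is bounded by the average degree $d$ up to constants (as also argued in~\cite{wang2018graphgan} for the unsigned case). A more careful argument could replace $d$ by the expected degree of a neighbor, but in the power-law regime both quantities differ only by a constant factor that is absorbed into the $O(\cdot)$, yielding the claim $O(d\ln|V|)$.
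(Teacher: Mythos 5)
Your proof takes essentially the same route as the paper's: decompose the cost into the BFS path length from $v_j$ to $v_i$ times the per-hop cost of touching the neighbors of each node on the path, bound the average path length by $O(\ln|V|)$ using the known small-world asymptotics for power-law networks, and bound the per-hop cost by the average degree $d$. Your additional remark about degree-biased sampling along BFS paths is a caveat the paper does not address, but you resolve it the same informal way, so the argument is the same in substance.
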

\begin{proof}
For an arbitrary node $v_i$, calculating $J(\phi_i|v_j)$ requires updating the representations of the nodes that lie in the path from $v_j$ to $v_i$ and also those that are directly connected to these. As derived in ~\cite{fronczakPRE04}, for networks with power-law degree distribution, given as, $P(k)\sim k^{-\alpha}$, in the limit $|V|\rightarrow \infty$, the average path length between the nodes depends on the scaling exponent $\alpha$. When $\alpha<3$, the average path length saturates to $\frac{2}{3-\alpha}+\frac{1}{2}$, whereas for $\alpha=3$, the average path length approximates to $\frac{\ln |V|}{\ln\ln |V|}$ and when $\alpha>3$, then the same tends to $\ln |V|$. Thus, as the average path length for a large class of networks is maximally $O(\ln |V|)$, hence the total number of nodes whose representations would be updated is $O(d\ln |V|)$ on an average.
\end{proof}
We next intuitively discuss whether the modified softmax can maintain the structural balance property if that exists in the network. For any node triplet, $v_{r_i}, v_{r_j}$ and $v_{r_k}$, if edges exist between $(v_{r_i}, v_{r_j})$ and $(v_{r_j}, v_{r_k})$, then from Equation~\ref{eq:genRoot}, we find that $J(\phi_{r_k}=s|v_{r_i})$ contains the expression $(\pi^{i}_{j,s}\pi^{j}_{k,s}+\pi^{i}_{j,s'}\pi^{j}_{k,s'})$ indicating that the corresponding probability of an edge between $v_{r_i}$ and $v_{r_k}$ being positive is high when either the probability of both edges $(v_{r_i}, v_{r_j})$ and $(v_{r_j}, v_{r_k})$ being positive is high or both of them being negative is high. This relation would be learnt from the discriminator function which would return true with a very high probability when the sign of edge $(v_{r_i}, v_{r_k})$ is determined as positive by the generator. Same holds for the negative edges between $(v_{r_i}$ and $v_{r_k})$. If either $\pi^{i}_{j,s}$  and $\pi^{j}_{k,s'}$  or $\pi^{i}_{j,s'}$  and $\pi^{j}_{k,s}$ are both high, then the generator would automatically increase the probability of $J(\phi_{r_k}=s'|v_{r_i})$. 
We next investigate the performance of the SigGAN approach.
%\subsection{Implementation}
%\textcolor{red}{Roshni please write}

\section{Experiments} \label{s:expt}
%%\subsection{Experiments}
In this Section, we provide details of our experiments. We initially discuss the datasets used followed by the existing research works that we use for comparisons and finally, discuss the results obtained.

\subsection{Dataset Details}\label{signdata}
We analyze the performance of SigGAN on five datasets, \textit{Slashdot}, \textit{Epinions}, \textit{Bitcoin}, \textit{Reddit} and \textit{Wiki-RFA}. A brief overview of the datasets is as follows :

\begin{itemize}
    \item \textit{Slashdot}: This dataset\footnote{https://snap.stanford.edu/data/soc-sign-Slashdot090221.html} is created from the relationships among users who tag each other as either friends or foes from the news website, \textit{Slashdot}~\cite{leskovec2010signed}.
    \item \textit{Epinions}: This dataset is created from the consumer reviewer site, \textit{Epinions}, where users share trust or distrust relationships among themselves\footnote{https://snap.stanford.edu/data/soc-sign-epinions.html}~\cite{leskovec2010signed}.
    \item \textit{Reddit}: We obtained this dataset from the SNAP database\footnote{https://snap.stanford.edu/data/soc-RedditHyperlinks.html}. This is a directed network of hyperlinks created from subreddit pairs created from posts that create hyperlink from one post to another~\cite{kumar2018community}. The sign of the link is based on the sentiment of the source subreddit towards the hyperlinked subreddit. 
\item \textit{Bitcoin}: This dataset represents the trust or distrust relationship between users who trade on the Bitcoin OTC\footnote{https://snap.stanford.edu/data/soc-sign-bitcoin-otc.html}~\cite{kumar2016edge} platform. 
\item \textit{Wiki-RFA}: This dataset\footnote{https://snap.stanford.edu/data/wiki-RfA.html} represents the voting relationship among the users, i.e., nodes represent users and the edges represent the votes among the users~\cite{west2014exploiting}. Therefore, the sign of the edges represent supporting, neutral, or opposing vote.
\end{itemize}

\begin{table}{}
\centering
\begin{tabular}{|c|c|c|c|c|c|c|}
\hline\hline
Dataset&\textit{Slashdot}&\textit{Epinions}&\textit{Reddit}&\textit{Bitcoin}&\textit{Wiki-RFA}\\
\hline\hline
Nodes&7000&7000&6999&5877&7118\\
\hline			
Edges&431098&734408&277050&21436&201386\\
\hline
Positive Edges&321030&642397&249580&18282&157206\\
\hline
Negative Edges&110068&92011&27470&3154&44180\\
\hline
\end{tabular}
\caption{Dataset Details}\label{tab:sgndata2}
\end{table}
%the volume of the data is enormously large and, hence we follow Shen et al.~\cite{shen2018deep} to create graphs of each of the dataset by selecting the the top $7000$ nodes with the highest degree and their corresponding edges. 

For our experiments, we considered the dataset for \textit{Wiki-RFA}, \textit{Reddit} and \textit{Bitcoin} and  follow Shen et al.~\cite{shen2018deep} to create graphs for \textit{Slashdot} and \textit{Epinions} datasets. We outline the details of each of these datasets in the Table~\ref{tab:sgndata2}. Our observations indicate that the number of negative edges is quite low compared to the number of positive edges and the sparsity of the graphs differ across datasets.
\subsection{Baselines}\label{s:signbase}
We compare SigGAN with 3 state-of-the-art deep learning-based techniques for signed link prediction. Each of these techniques generates node representations using different strategies that are used to determine the sign of the links. We briefly discuss these baselines next.

\begin{table*}{}
\centering
\begin{tabular}{|c|c|c|c|c|c|c|}
\hline\hline
Dataset&$Slashdot$&$Epinions$&$Reddit$&$Bitcoin$&$Wiki-RFA$\\
\hline\hline
$L1$&0.74&0.883&0.93&0.86&0.76\\
\hline			
$L2$&0.76&0.878&0.94&\textbf{0.87}&0.76\\
\hline
$Had$&\textbf{0.79}&\textbf{0.89}&\textbf{0.96}&0.86&0.75\\
\hline
$Avg$&0.78&0.883&0.93&0.868&\textbf{0.77}\\
\hline
$Concat$&0.74&0.87&0.92&0.87&0.74\\
\hline
\end{tabular}
\caption{Mean micro F1-score of the SigGAN for different edge vector representations}\label{tab:sgnvec1}
\end{table*}

\begin{enumerate}
\item {\textit{DNE-SBP}}:  Shen et al.~\cite{shen2018deep} proposed a deep neural embedding with structural balance preservation (DNE-SBP) model to learn graph representations for signed networks. The method uses a semi-supervised stacked auto-encoder to explore the connectivity patterns of a node. Further, the SigGAN exploited this connectivity relationship of a node and extended structural balance theory to generate node embeddings. They explicitly ensured that more importance is given in reconstructing the negative edges than the positive edges. 
%We refer to this baseline as $B_1$ in table~\ref{tab:sgnres6}.

\item{\textit{SIDE}}: 
Kim et al.~\cite{kim2018side} proposed a method for representation learning in signed directed networks (SIDE) that incorporates sign, direction, and proximity relationships of the nodes to generate low dimensional vector representations of these nodes. This model generates random walks across the graph that captures the proximity between nodes in the random walks, and subsequently generates the likelihood of co-occurrence of a node pair based on this information. Further, in the next step, using this information, SigGAN generates embedding of the nodes following a skip-gram technique with negative sampling. 
%This approach is referred to by $B_2$ in Table~\ref{tab:sgnres6}.

\item{\textit{SNE}}: Yuan et al.~\cite{yuan2017sne} proposed a technique for signed network embedding (SNE) that uses a  log-bilinear model for generating the node embeddings. The model considers the node representations of all the nodes along different paths of a node to generate its embedding. The model computes the embedding of the target node given the information of the predecessors along the path and the signed relationship among these predecessors by maximizing the log-likelihood. 
%We refer to this approach by $B_3$ in Table~\ref{tab:sgnres6}.
\end{enumerate}

For our experiments, we perform stochastic gradient descent to update parameters in SigGAN with learning rate $0.001$. For each iteration, we set the number of positive samples  as $20$ and repeat this for $10$ epochs for both generator and discriminator respectively with batch size as $32$. Subsequently, we set the number of epochs for SigGAN as $10$ and consider the dimension of node embedding as $50$. We set the values of the hyper-parameter by cross-validation which we discuss in details in Section \ref{s:hyp}. We used the available codes provided by the authors of each of the baselines for our experiments. We next outline the performance measures used for our investigations, the results obtained and the impact of the values of hyper-parameters on the performance of SigGAN.

\begin{table*}{}
\centering
\begin{tabular}{|c|c|c|c|c|c|c|}
\hline\hline
Dataset&$Slashdot$&$Epinions$&$Reddit$&$Bitcoin$&$Wiki-RFA$\\
\hline\hline
$SigGAN$&\textbf{0.79}&\textbf{0.89}&\textbf{0.96}&\textbf{0.86}&\textbf{0.75}\\
\hline			
$DNE-SBP$&0.68&0.83&0.86&0.83&0.70\\
\hline
$SIDE$&0.73&0.87&0.93&0.84&0.70\\
\hline
$SNE$&0.73&0.86&0.93&0.84&0.69\\
\hline
\end{tabular}
\caption{Mean micro F1- score of the SigGAN with the existing research works}\label{tab:sgnres4}
\end{table*}

\section{Results and Discussions} \label{s:res}
In this Section, we compare the performance of the SigGAN with the baselines. As there is a class imbalance in signed networks, i.e., the number of negative edges is very less compared to the number of positive edges (as shown in Table \ref{tab:sgndata2}), we use \textit{micro-F1} score which captures the effectiveness of SigGAN in predicting both the signs. The \textit{micro-F1} score is calculated as the harmonic mean of average \textit{precision} (average of the respective precision of SigGAN in predicting positive and negative edges respectively) score and average \textit{recall} (average of the recall of the SigGAN in predicting positive and negative edges respectively) score. Therefore, a higher \textit{micro-F1} score of SigGAN in link prediction implies that SigGAN can predict the sign of both positive and negative edges effectively. We also evaluate the performance of SigGAN in handling the sparsity of the graphs by inducing different levels of sparsity in the datasets. Additionally, we perform a case study where we investigate whether \textit{structural balance theory} property is satisfied in the node embedding generated by SigGAN.

\subsection{Link Prediction} \label{s:sgnlink}
Link prediction in signed networks is to identify the sign of an edge. We compare the performance of SigGAN with the baselines, DNE-SBP, SIDE and SNE in identifying the sign of the edge. Based on the experiments followed in the existing research works, we use a logistic regression classifier~\cite{derr2018signed, shen2018deep} that is trained on the vector representation of the edges as features. Since SigGAN and the baselines generate the node vector representations, we initially derive the corresponding edge vector representation, $e_{ij}$. We follow $5$ different methods, L1-norm, L2-norm, Hadamard product, average and concatenation to determine $e_{ij}$ as followed in existing research works~\cite{kim2018side, shen2018deep}. We, then, train the \textit{logistic regression} model on the training set for the edge representation. We use the trained \textit{logistic regression} model to predict the sign of the links in the validation test. We use 5-fold cross-validation as shown in~\cite{kim2018side} and report the mean \textit{micro F1-score} which is calculated as the average of \textit{micro F1-score} over $5$ independent runs. A higher mean \textit{micro F1-score} indicates better performance in predicting the sign of both the positive and negative links.  We repeat the experiments for \textit{Slashdot}, \textit{Epinions}, \textit{Wiki-RFA}, \textit{Reddit} and \textit{Bitcoin}.

\begin{table*}{}
\centering
\begin{tabular}{|c|c|c|c|c|c|c|c|c|c|c|c|c|c|c|c|c|c|c|}
\hline\hline
{Appr}&\multicolumn{4}{c|}{Slashdot}&\multicolumn{4}{c|}{Epinions}&\multicolumn{4}{c|}{Wiki-RFA}\\
\hline
\cline{2-5}
    & {80\%} & {60\%} & {40\%} & {20\%} & {80\%} & {60\%} & {40\%} & {20\%}& {80\%} & {60\%} & {40\%} & {20\%}\\
\hline
$SigGAN$&0.72&\textbf{0.76}&\textbf{0.77}&\textbf{0.78}&\textbf{0.89}&\textbf{0.86}&\textbf{0.87}&\textbf{0.88}&\textbf{0.76}&0.73&\textbf{0.76}&\textbf{0.74}\\
\hline
$DNE-SBP$&0.71&0.76&0.76&0.78&0.89&0.84&0.86&0.84&0.72&0.73&0.73&0.71\\
\hline
%$SGCN$&xxx&yyy&xxx&yyy&xxx&xxx&xxx&xxx&xxx&xxx&xxx&xxx\\
%\hline	
$SIDE$&0.71&0.76&0.76&0.75&0.87&0.84&0.86&0.87&0.74&\textbf{0.75}&0.73&0.71\\
\hline	
$SNE$&\textbf{0.75}&0.75&0.77&0.75&0.76&0.76&0.86&0.84&0.74&0.73&0.71&0.71\\
\hline
\end{tabular}
\caption{Comparison results of mean micro F1 score of SigGAN with the baselines for different different levels of sparsity}\label{tab:sgnres6}
\end{table*}

\begin{table*}{}
\centering
\begin{tabular}{|c|c|c|c|c|c|c|c|c|c|c|c|c|c|c|c|c|c|c|}
\hline\hline
{Appr}&\multicolumn{4}{c|}{Bitcoin}&\multicolumn{4}{c|}{Reddit}\\
\hline
\cline{2-5}
    & {80\%} & {60\%} & {40\%} & {20\%} & {80\%} & {60\%} & {40\%} & {20\%}\\
\hline
$SigGAN$&\textbf{0.81}&\textbf{0.83}&\textbf{0.84}&\textbf{0.86}&0.90&\textbf{0.92}&\textbf{0.93}&\textbf{0.94}\\
\hline
$DNE-SBP$&0.79&0.81&0.81&0.82&0.86&0.85&0.85&0.85\\
\hline
%$SGCN$&xxx&yyy&xxx&yyy&xxx&xxx&xxx&xxx&xxx&xxx&xxx&xxx\\
%\hline	
$SIDE$&0.80&0.81&0.81&0.83&\textbf{0.91}&0.89&0.90&0.92\\
\hline	
$SNE$&0.81&0.81&0.82&0.82&0.90&0.90&0.90&0.91\\
\hline
\end{tabular}
\caption{Comparison results of mean micro F1 score of the SigGAN with the baselines for different levels of sparsity}\label{tab:sgnres16}
\end{table*}

\begin{figure*}
     %\centering
     \begin{subfigure}[b]{0.33\textwidth}
    \includegraphics[width=2.1in]{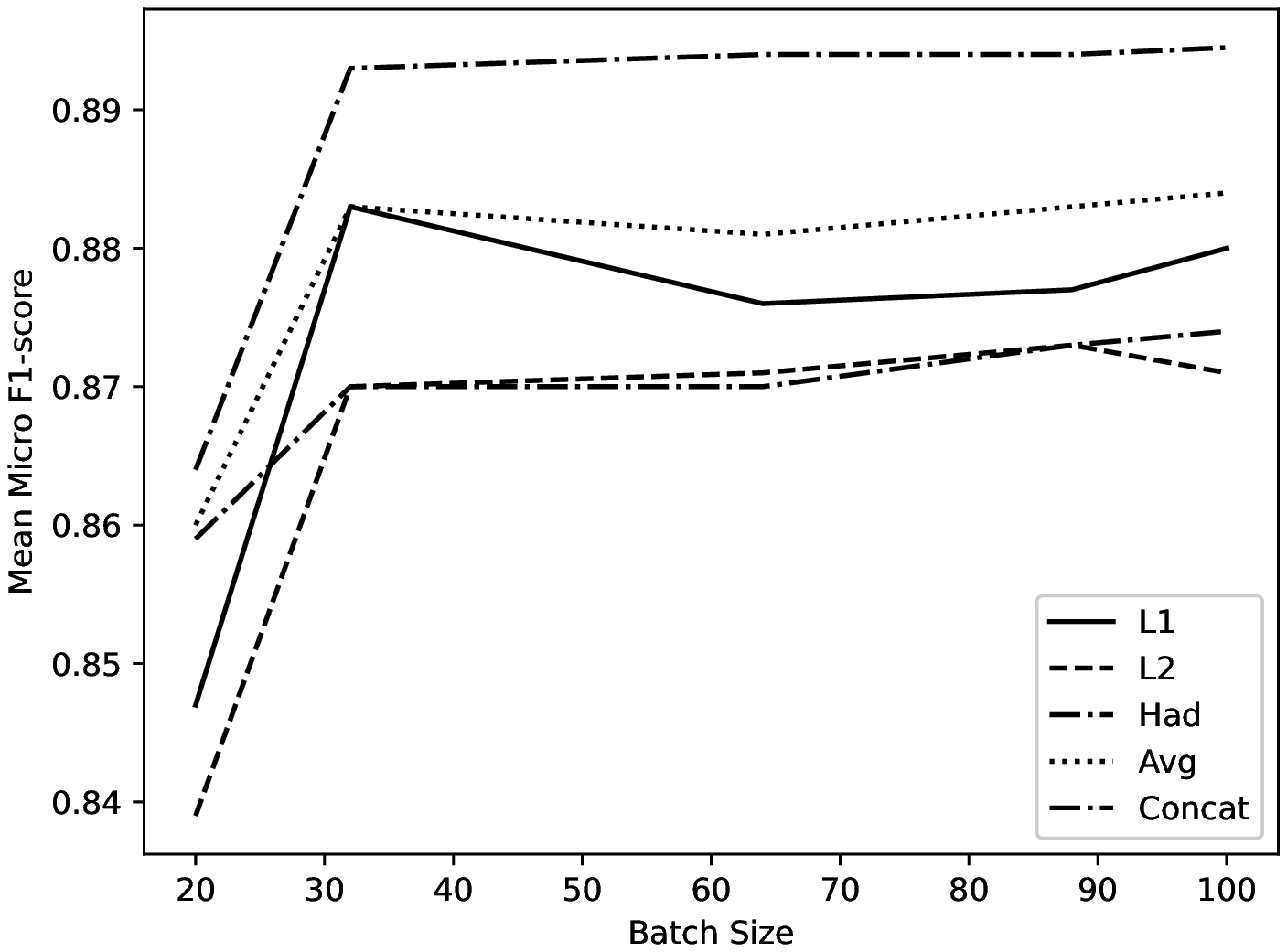}
        \caption{}
        \label{fig:p1}
     \end{subfigure}
    \hfill
     \begin{subfigure}[b]{0.33\textwidth}
    \includegraphics[width=2.1in]{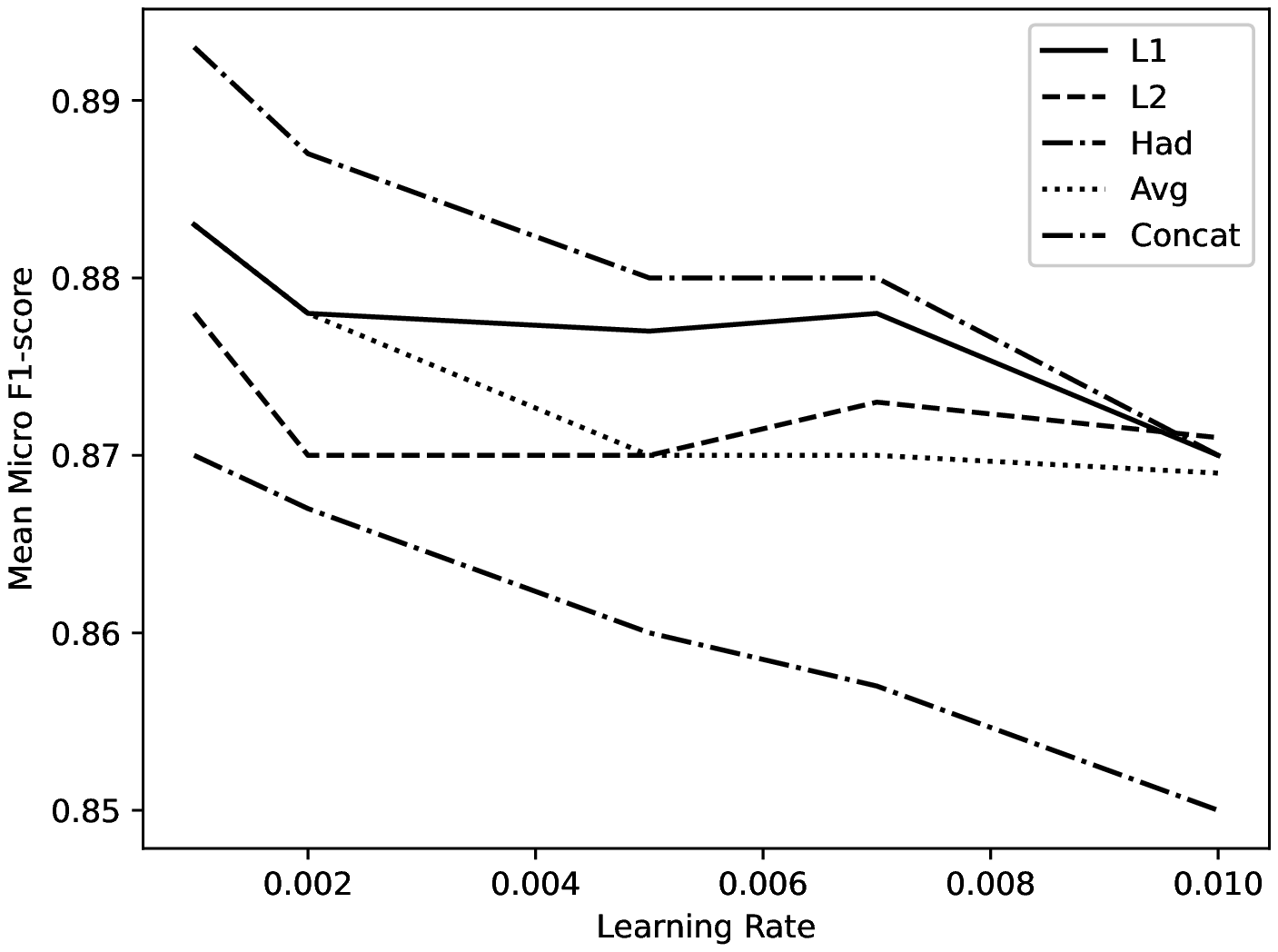}
        \caption{}
        \label{fig:p4}
     \end{subfigure}
     \begin{subfigure}[b]{0.33\textwidth}
    \includegraphics[width=2.1in]{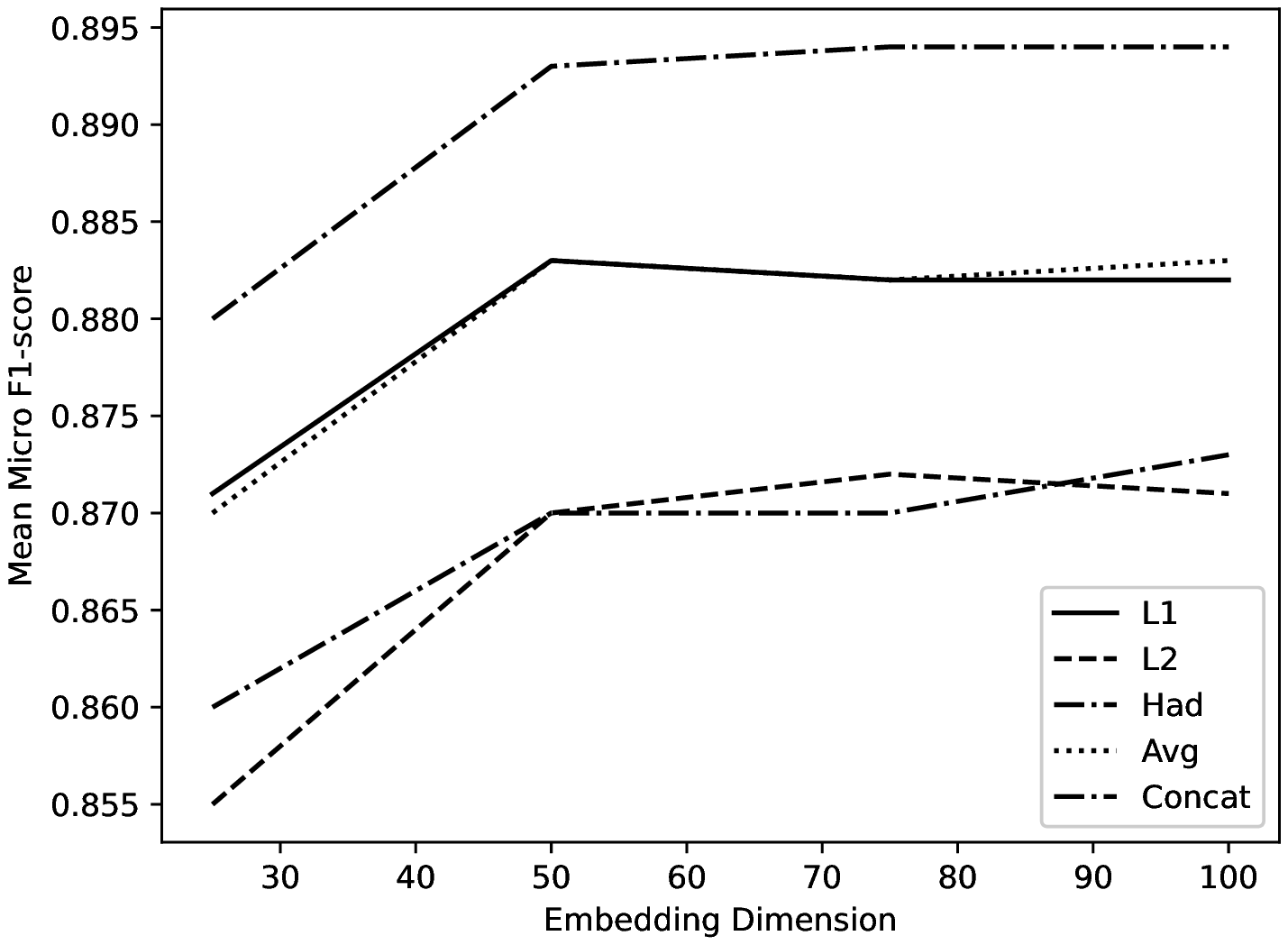}
        \caption{}
        \label{fig:p5}
     \end{subfigure}
     \caption{Comparing the effect of change in batch size, learning rate and embedding dimension on Mean Micro F1-score for \textit{Epinions} in Figure \ref{fig:p1}, Figure \ref{fig:p4} and Figure \ref{fig:p5} respectively.}
      \label{fig:pa1}
\end{figure*}

\begin{figure*}
     \centering
\begin{subfigure}[b]{0.45\textwidth}
    \includegraphics[width=2.5in]{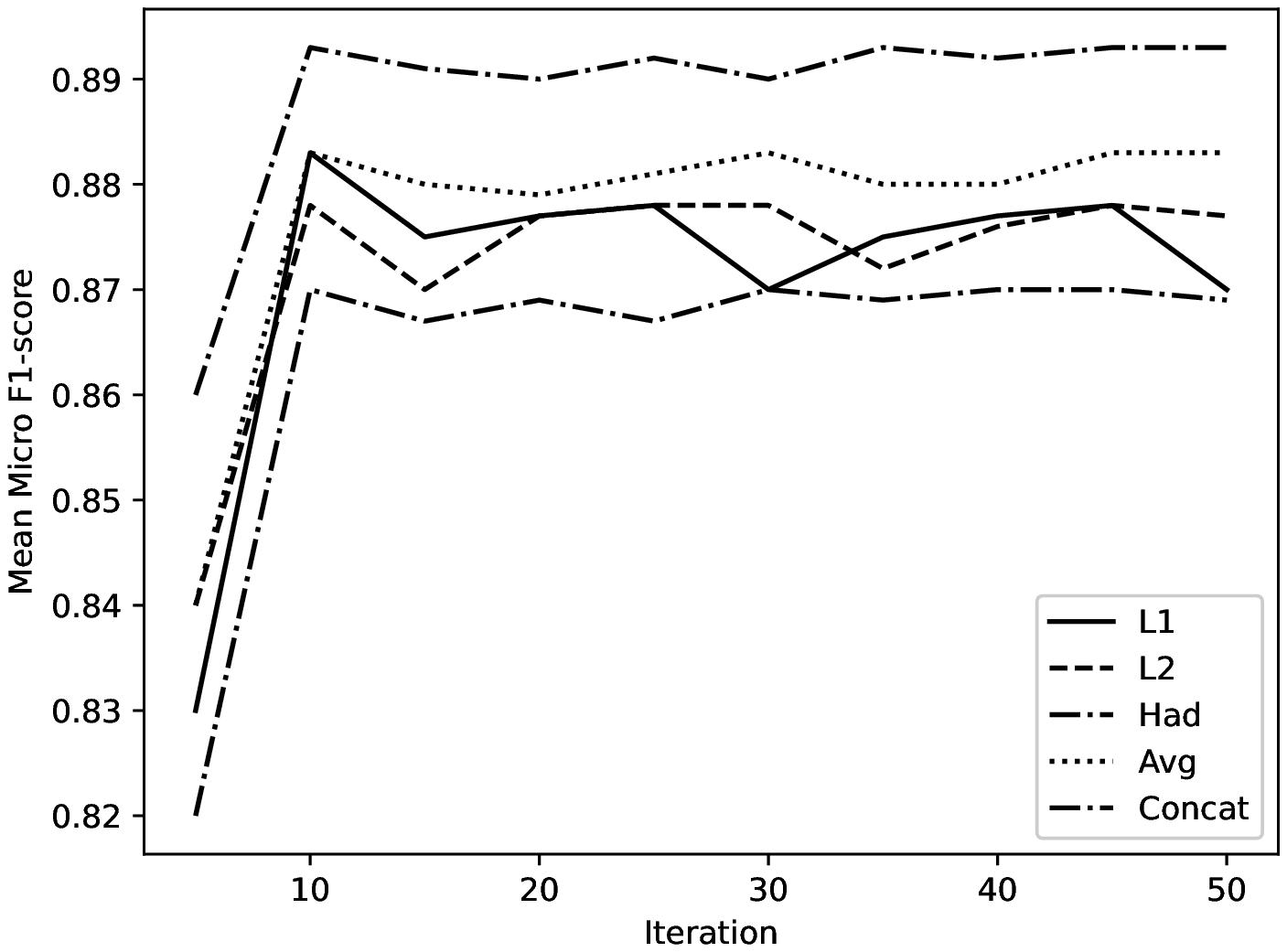}
        \caption{}
        \label{fig:p2}
     \end{subfigure}
     \begin{subfigure}[b]{0.45\textwidth}
    \includegraphics[width=2.5in]{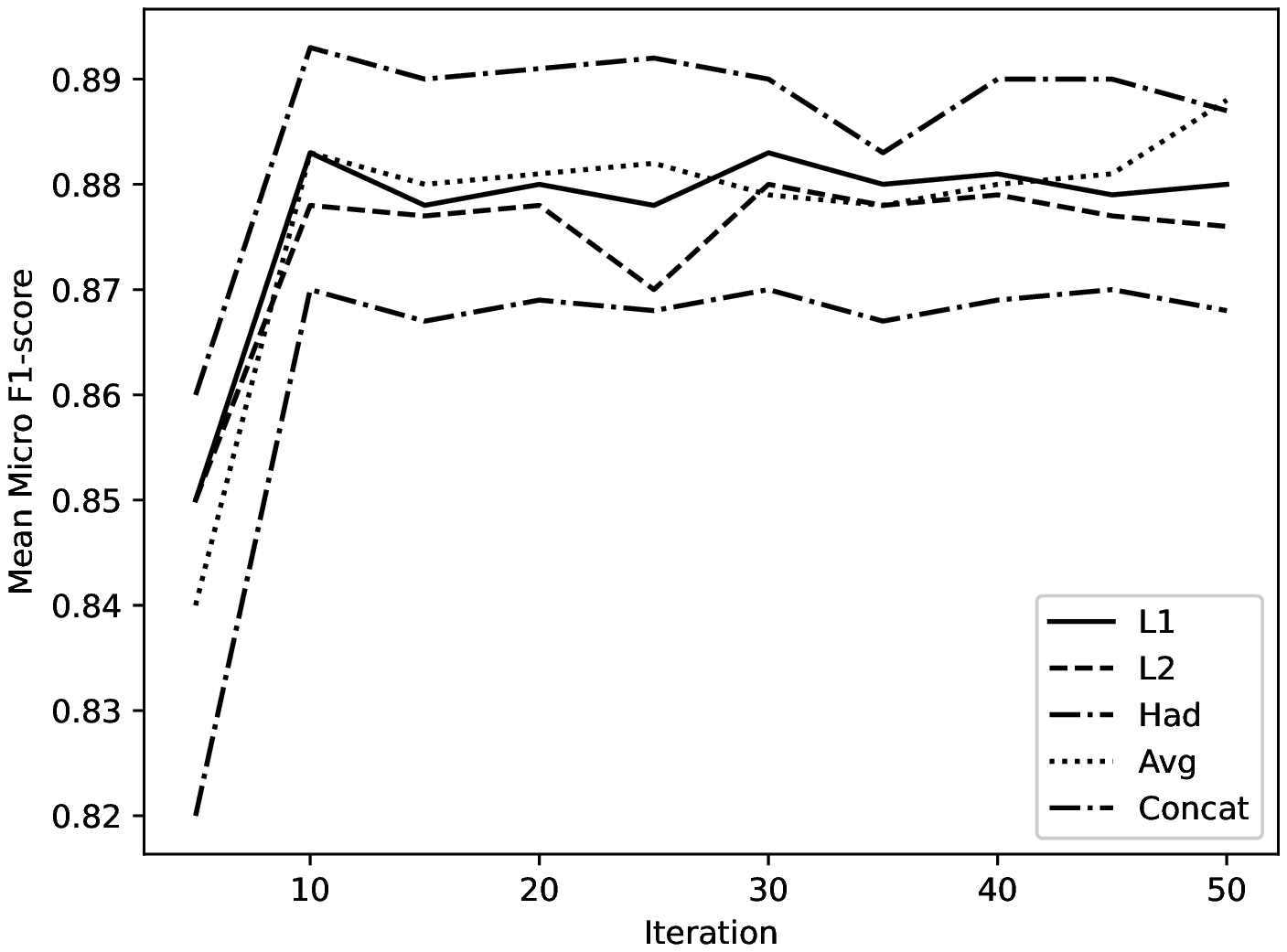}
        \caption{}
        \label{fig:p3}
     \end{subfigure}
      \caption{Comparing the effect of change of number of epochs for Generator and Discriminator and number of epochs of SigGAN on Mean Micro F1-score for \textit{Epinions} dataset in Figure \ref{fig:p2} and Figure \ref{fig:p3} respectively.}
      \label{fig:pa2}
\end{figure*}
\par We show the mean micro F1-score for different types of edge representations for SigGAN in Table \ref{tab:sgnvec1}. We observe that the edge representation by Hadamard product performs the best for \textit{Slashdot}, \textit{Epinions} and \textit{Reddit}, edge representation by average for \textit{Wiki-RFA} and edge representation by L2-norm for \textit{Bitcoin} respectively. Furthermore, the edge representation by Hadamard product for \textit{Wiki-RFA} and \textit{Bitcoin} is very near to the best performance with a difference of $0.02$ and $0.01$ respectively. Based on these observations, we choose edge representation by Hadamard product for SigGAN. In~\cite{kim2018side, shen2018deep}, it has been shown that the baselines have better performance when the logistic regression model is trained by the edge vector representation by Hadamard product than the other representations. Therefore, in order to maintain uniformity, we consider the edge vector representation by Hadamard product for our comparison results. We show the mean micro F1-score in Table \ref{tab:sgnres4}. Our observations from Table \ref{tab:sgnres4} indicates the SigGAN provides better performance than all the existing research works in all the datasets. The mean micro F1-score of the SigGAN is around $3-11\%$ more than the existing research works. The performance of the SigGAN is the best for \textit{Reddit} dataset with around $0.96$ followed by $0.89$ in \textit{Epinions}. We next discuss the values of each hyper-parameter and it's effect on the performance of SigGAN.

\subsubsection{Hyperparameter Analysis} \label{s:hyp}: We discuss next how we set the values of the different parameter or hyperparameter for SigGAN. For each parameter or hyperparameter, we vary its values and calculate the mean micro F1-score for $5$ independent runs for each value. Therefore, we select that value for a parameter or hyperparameter which gives the best mean micro F1-score during training. We consider edge representations by L1-norm, L2-norm, Hadamard product, average and concatenation. We show our observations for \textit{Epinions} in Figure \ref{fig:pa1} and Figure \ref{fig:pa2} respectively. Based on our observations from Figure \ref{fig:p1}, we find the value of batch-size of $32$ provides the maximum mean micro F1-score irrespective of the function for edge representation except for L2-norm and Hadamard product when batch size is $100$. However, our observations indicate that the increase in results for L2-norm and Hadamard product when batch size is $100$ is very small and further, it might lead to overfitting if we increase the batch size to $100$. Therefore, we consider $32$ as batch size for SigGAN. Similarly, we observe for learning rate as $0.001$, SigGAN has the maximum mean micro F1-score. Additionally, to decide the embedding dimension, we vary the dimension from $25-100$ and calculate the result of SigGAN. Our observations as shown in Figure \ref{fig:p5} indicate that increasing the dimension from $50$ to $100$ provides almost similar mean micro F1-score and dimension of size $25$ has the least mean micro F1-score. As we want to have as low dimension as possible, we choose $50$ as the node embedding dimension. We specify the number of epochs by which we train SigGAN as $s$ and $t$ as the number of epochs for which discriminator and generator runs for each value of $s$. Our observations from Figure \ref{fig:p2} and Figure \ref{fig:p3} indicate that SigGAN has the maximum mean micro F1-score irrespective of the function for edge representation when both $s$ and $t$ are $10$ and further, the increase in the number of epochs has very less impact in the mean micro F1-score. Therefore, the modified Softmax specifically for signed networks and simultaneous training of SigGAN by equal number of positive and negative edges leads to quicker convergence as observed in Figure \ref{fig:p2} and Figure \ref{fig:p3}. Therefore, based on our training mean micro F1-score, we set these values for all the parameter and hyperparameter values.
%\textbf{window size left}

\subsection{Handling Sparsity}\label{s:sparse}
We investigate the performance of SigGAN in handling sparse graphs.  For our experiments, we induce sparsity in the original graph by randomly removing $20\%$, $40\%$, $60\%$ and $80\%$ of the edges from the original graph. For each of these sparse graphs, we generate the embedding of each node using SigGAN. Further, to create a graph with specific sparsity, say $20\%$, we repeat the procedure $5$ times so that we have $5$ different graphs with $20\%$ sparsity. For each of these graphs, we then evaluate the performance of SigGAN in predicting the sign of the link by following the procedure discussed in Section \ref{s:sgnlink}. Additionally, since we follow edge representation by Hadamard product in Section \ref{s:sgnlink}, we report only the results for  edge representation by Hadamard product here. We repeat the same experiment for \textit{Slashdot}, \textit{Epinions}, \textit{Wiki-RFA}, \textit{Reddit} and \textit{Bitcoin}. We calculate the mean \textit{micro F1-score} as the average of \textit{micro F1-score} over $5$ different graphs with same sparsity and \textit{micro F1-score} over a graph is calculated by 5-fold cross validation. We show our observations in Table \ref{tab:sgnres6} and Table \ref{tab:sgnres16} respectively. We discover that SigGAN ensures better performance than the baselines in all the datasets when the sparsity is increased except for when $80\%$ of the edges were removed for \textit{Slashdot}, \textit{Reddit} and $40\%$ for \textit{Wiki-RFA}. Subsequently, we observe that SigGAN ensures almost consistent performance irrespective of the sparsity in the signed network, i.e., $0.72-0.78$ for \textit{Slashdot}, $0.86-0.89$ for \textit{Epinions}, $0.73-0.76$ for \textit{Wiki-RFA}, $0.81-0.86$ for \textit{Epinions} and $0.90-0.94$ for \textit{Reddit}.

\subsection{Case Study : Analysis of Embedding by SigGAN} \label{signcase}
We now study whether the network representations learned by SigGAN follows the extended structural balance property, i.e., positively connected nodes are closer in the representation space than the negatively connected nodes, as discussed previously in ~\cite{islam2018signet}. For our experiments, we follow the procedure adopted in existing research works~\cite{islam2018signet,kim2018side}. In order to measure this, we calculate the average positive edge distance, $APED$ as the average distance between the vector representation of the positively connected nodes and $ANED$ for negatively connected nodes, i.e.,

\begin{eqnarray}
APED= \frac{(\sum_{i=1}^{n} \sum_{j=1}^{n} A_{ij}^+ d_{ij})}{(\sum_{i=1}^{n} A_{ij}^+)} \\
ANED= \frac{(\sum_{i=1}^{n} \sum_{j=1}^{n} A_{ij}^- d_{ij})}{(\sum_{i=1}^{n} A_{ij}^-)} 
\end{eqnarray}

where, $A_{ij}^+$ is $1$ if node $i$ and node $j$ are positively connected else $0$, $A_{ij}^-$ is is $1$ if node $i$ and node $j$ are negatively connected else $0$ and $d_{ij}$ is the Euclidean distance between the vector representations of $i$ and $j$. Therefore, for a network to follow \textit{extended structural balance theory}, $ANED$ should be greater than $APED$. We randomly select equal number of positively and negatively connected edges, i.e., $40\%$ of the total negatively connected edges. We repeat the same experiment for all the datasets, i.e., 
\textit{Slashdot}, \textit{Epinions}, \textit{Reddit}, \textit{Bitcoin} and \textit{Wiki-RFA}. The results are shown in Table~\ref{tab:sgnres5} which indicates that $APED$, is around $1.5-4.0$ less than $ANED$, irrespective of the dataset. We discover SigGAN has the best performance for \textit{Epinions} and \textit{Bitcoin} where the average distance between positively and negatively connected nodes are the highest. Our observations confirm that SigGAN can ensure that the positively connected nodes have a smaller distance than the negatively connected nodes. Thus, these results indicate that for signed networks, the SigGAN ensures positively connected node pairs are placed closer than the negatively connected node pairs and can, therefore, ensure extended structural balance theory property.

%We calculate the distance between the vector representation of a pair of nodes when connected positively or negatively. We repeat this experiment for an equal number of positive and negative edges selected randomly and calculate the mean distance between a pair of nodes when connected positively or negatively to ensure our observations are not biased. We repeat the same experiment for all the datasets, i.e., \textit{Slashdot}, \textit{Epinions}, \textit{Reddit}, \textit{Bitcoin} and \textit{Wiki-RFA}. The results are shown in Table~\ref{tab:sgnres5} which indicates that the distance between positively connected node pair is around $1.5-4.0$ less than the negatively connected node pair irrespective of the dataset. We observed that the difference between positively and negatively connected nodes as maintained by the SigGAN is similar to the existing works as reported by Islam et al.~\cite{islam2018signet} and Kim et al.~\cite{islam2018signet,kim2018side} where the distance varied between $0.4-0.679$ and $1.5-1.8$ for different datasets. We discover that the average distance for \textit{Epinions} and \textit{Bitcoin} are higher than the rest. Thus, these results indicate that for signed networks, the SigGAN ensures positively connected node pairs are placed closer than the negatively connected node pairs and can, therefore, ensure \textit{structural balance theory} property. 

\begin{table*}{}
\centering
\begin{tabular}{|c|c|c|c|c|c|c|}
\hline\hline
Dataset&$Slashdot$&$Epinions$&$Reddit$&$Bitcoin$&$Wiki-RFA$\\
\hline\hline
$APED$&73&54.5&10.90&45&10\\
\hline			
$ANED$&74.50&58.56&12&48&11\\
\hline
\end{tabular}
\caption{We show $APED$ and $ANED$ for \textit{Slashdot}, \textit{Epinions}, \textit{Reddit}, \textit{Bitcoin} and \textit{Wiki-RFA}}\label{tab:sgnres5}
\end{table*}

%%\textcolor{red}{I am not convinced how is the balanc

\begin{comment}

\begin{table*}{}
\centering
\begin{tabular}{|c|c|c|c|c|c|c|c|c|c|c|c|c|c|c|c|c|}
\hline\hline
\multirow{}{Algorithms}&\multicolumn{4}{c|}{Reddit}&\multicolumn{4}{c|}{Bitcoin}&\multicolumn{4}{c|}{Wiki-RFA}\\
\hline\hline
\cline{2-5}
    & {20\%} & {40\%} & {60\%} & {80\%} & {20\%} & {40\%} & {60\%} & {80\%}& {20\%} & {40\%} & {60\%} & {80\%}\\
\hline
$Proposed$&xxx&yyy&xxx&yyy&xxx&xxx&xxx&xxx&xxx&0.71&0.76&0.74\\
\hline
$DNE-SBP$&xxx&yyy&xxx&yyy&xxx&xxx&xxx&xxx&0.72&0.73&0.73&0.71\\
\hline
$SGCN$&xxx&yyy&xxx&yyy&xxx&xxx&xxx&xxx&xxx&yyy&xxx&yyy\\
\hline	
$SIDE$&xxx&yyy&xxx&yyy&xxx&xxx&xxx&xxx&xxx&yyy&xxx&yyy\\
\hline	
$SNE$&0.75&0.75&0.77&0.75&0.76&0.76&0.86&0.84&0.74&0.73&0.71&0.71\\
\hline
\end{tabular}
\caption{Comparison results of micro F1-score  of the SigGAN with the baselines is shown for \textit{Reddit}, \textit{Bitcoin} and \textit{Wiki-RFA} datasets.}\label{tab:sgnres2}
\end{table*}
%\section{References}

\begin{table*}{}
\centering
\begin{tabular}{|c|c|c|c|c|c|c|}
\hline\hline
Dataset&$Slashdot$&$Epinions$&$Reddit$&$Bitcoin$&$Wiki-RFA$\\
\hline\hline
$Positive$&73&54.5&10.90&45&10\\
\hline			
$Negative$&74.50&58.56&12&48&11\\
\hline
\end{tabular}
\caption{Mean distance between positively(or, negatively) connected nodes by the SigGAN}\label{tab:sgnres5}
\end{table*}
\end{comment}

\subsection{Summary of Insights}~\label{s:sum}
We discuss SigGAN in comparison to GraphGAN \cite{wang2018graphgan} and it's applicability to signed networks followed by an analysis of the performance of the SigGAN with respect to the baselines followed by the shortcomings of SigGAN next. We develop SigGAN based on GraphGAN \cite{wang2018graphgan} which utilizes the inherent characteristics of the graph structure in a GAN based framework to generate node embedding. Although GraphGAN \cite{wang2018graphgan} is very effective for link prediction in unsigned networks, it can not be directly applied to signed networks. Therefore, in this paper, we propose SigGAN which is not a direct extension of GraphGAN but explicitly considers the specific characteristics of signed networks, like \textit{structural balance theory} and \textit{high imbalance in number of positive and negative edges}. We propose a modified softmax for signed networks which integrates \textit{structural balance theory} with the existing properties from unsigned networks, like \textit{normalization} and \textit{graph structure awareness}. Therefore, by integration of these properties in the proposed modified softmax, we ensure that the generator is highly effective in generating negative samples for the discriminator. We show in Subsection \ref{s:modSmax} how we formulate modified softmax such that it considers the properties. We prove appropriate theorems in Subsection \ref{s:modSmax} to show that \textit{normalization}, \textit{graph structure awareness} and \textit{computational efficiency}, are satisfied in the generator. Furthermore, to show that \textit{structural balance theory} is satisfied by SigGAN, we prove through intuition in Subsection \ref{s:modSmax} and through a case study in Subsection \ref{signcase}. In order to handle \textit{high imbalance in number of positive and negative edges} and high information content in the negative edges, we train the discriminator with equal number of positive and negative edges. This further ensures that SigGAN can ensure prediction of negative edge effectively. Our observations in Subsection \ref{s:sgnlink} and Subsection \ref{s:sparse} shows that SigGAN has a higher mean micro F1-score than the baselines. As the micro F1-score considers the efficiency in predicting both the positive and the negative edges, Table \ref{tab:sgnres4} - \ref{tab:sgnres16} indicates the high performance of SigGAN in capturing both the positive and negative edges for original and sparse graphs respectively. Therefore, we believe the design of the modified softmax specific for signed networks for generator and simultaneous training by both positive and negative edges for discriminator are the main reasons behind success of SigGAN. While the generator is effective in generating negative samples irrespective of the sign of the edge considering the signed network characteristics, the discriminator learns to handle the imbalance in data.

\subsubsection{Limitations of SigGAN} : Although SigGAN ensures high effectiveness in comparison with the existing baselines irrespective of the dataset and presence of sparsity, we observe few limitations in SigGAN which we discuss next. A critical analysis of SigGAN would be that we did not explore the applicability of different discriminator models which can improve the performance. We evaluate the execution time for SigGAN and the other baselines on a machine with an Intel Core $i7$ $4.2GHz$ CPU and $32$ GB RAM. We observe during sparsity analysis that SigGAN is linearly scalable when the fraction of edges is increased gradually from $20\%$ of the total edges to all the edges irrespective of the dataset. The time taken are $32$, $40$, $51$, $60$ and $72$ minutes respectively for $20\%$, $40\%$, $60\%$, $80\%$ and $100\%$ of the edges for \textit{Epinions}. Furthermore, on comparing the execution time of SigGAN with the baselines, we observe that the execution time of DNE-SBP is the best followed by SigGAN and SIDE. For example, the execution time on \textit{Epinions}, was  $50$, $72$, $86$, $106$ minutes for DNE-SBP, SigGAN, SIDE and SNE respectively. We show our observations on \textit{Epinions} as it is the dataset with the most number of edges (around $0.8$ million edges). Therefore, although SigGAN has better performance than DNE-SBP, it requires more time to converge. We intuitively believe by exploring different discriminator models and by including structural role based information of the nodes in modified softmax, we can lower the execution time of SigGAN. We consider these as the future directions of the paper.

\section{Conclusions}~\label{s:con}
We propose a Generative Adversarial Network (GAN) based approach, SigGAN, for generating a representation of the nodes in signed networks. In signed networks, nodes with positively connected edges have similar representations whereas the representations of the nodes with negative edges have dissimilar representations. SigGAN is computationally efficient, handles imbalance in negative and positive edges and considers structural balance theory. Furthermore, it can be applied for predicting the sign of a link with high effectiveness.
Validation on $5$ datasets, like \textit{Slashdot}, \textit{Epinions}, \textit{Reddit}, \textit{Bitcoin} and \textit{Wiki-RFA} indicate the SigGAN can predict links with a high mean micro F1-score of $0.75-0.96$ which is higher than the existing state-of-the-art research works. As a future goal, we plan to extend SigGAN by exploring different discriminator functions, integration of role based information for node embedding and its applicability for directed signed networks. We, further, intend to make SigGAN applicable for temporal signed networks as most of the signed relationships formed in real life often undergo change in polarity with respect to time.

%\bibliography{biblio}
\bibliographystyle{acm}
\bibliography{sample1.bib}
\end{document}